\documentclass[acmsmall]{acmart}
\AtBeginDocument{%
  }

\usepackage{multirow} 
\usepackage{subfigure}
\usepackage{url}
\usepackage{booktabs}
\usepackage[ruled,vlined,linesnumbered]{algorithm2e}
\usepackage{longtable}

\begin{document}

\title{Enhanced Influence-aware Group Recommendation for Online Media Propagation}

\author{Chengkun He}
\email{ck131102@hotmail.com}
\orcid{0000-0003-3144-7166}
\affiliation{%
  \institution{RMIT University}
  \city{Melbourne}
  \state{VIC}
  \country{Australia}
}

\author{Xiangmin Zhou}
\orcid{0000-0002-1302-818X}
\affiliation{%
  \institution{RMIT University}
  \city{Melbourne}
  \state{VIC}
  \country{Australia}
  }
\email{xiangmin.zhou@rmit.edu.au}

\author{Chen Wang}
\affiliation{%
  \institution{Data61 CSIRO}
  \city{Sydney}
  \country{Australia}
}
\email{chen.wang@data61.csiro.au}

\author{Longbing Cao}
\affiliation{%
 \institution{Macquarie University}
 \city{Sydney}
 \state{NSW}
 \country{Australia}
 }
\email{longbing.cao@mq.edu.au}

\author{Jie Shao}
\affiliation{%
  \institution{University of Electronic Science and Technology of China}
  \city{Chengdu}
  \state{SiChuan}
  \country{China}}
\email{shaojie@uestc.edu.cn}

\author{Xiaodong Li}
\affiliation{%
  \institution{RMIT University}
  \city{Melbourne}
  \state{VIC}
  \country{Australia}
  }
\email{xiaodong.li@rmit.edu.au}

\author{Guang Xu}
\affiliation{%
  \institution{New Aim Pty Ltd}
  \city{Melbourne}
  \state{VIC}
  \country{Australia}
  }
\email{guang.xu@newaim.com.au}

\author{Carrie Jinqiu Hu}
\affiliation{%
  \institution{New Aim Pty Ltd}
  \city{Melbourne}
  \state{VIC}
  \country{Australia}
  }
\email{carrie.hu@newaim.com.au}

\author{Zahir Tari}
\affiliation{%
  \institution{RMIT University}
  \city{Melbourne}
  \state{VIC}
  \country{Australia}
  }
\email{zahir.tari@rmit.edu.au}

\renewcommand{\shortauthors}{Chengkun et al.}
\newcommand{\eat}[1]{}
\newcommand{\hck}[1]{\textcolor{blue}{[#1]}}
\newtheorem{myDef}{Definition}

\begin{abstract}
Group recommendation over social media streams has attracted significant attention due to its wide applications in domains such as e-commerce, entertainment, and online news broadcasting. By leveraging social connections and group behaviours, group recommendation (GR) aims to provide more accurate and engaging content to a set of users rather than individuals. Recently, influence-aware GR has emerged as a promising direction, as it considers the impact of social influence on group decision-making. In earlier work, we proposed Influence-aware Group Recommendation (IGR) to solve this task. However, this task remains challenging due to three key factors: the large and ever-growing scale of social graphs, the inherently dynamic nature of influence propagation within user groups, and the high computational overhead of real-time group-item matching. 

To tackle these issues, we propose an Enhanced Influence-aware Group Recommendation (EIGR) framework. First, we introduce a Graph Extraction-based Sampling (GES) strategy to minimise redundancy across multiple temporal social graphs and effectively capture the evolving dynamics of both groups and items. Second, we design a novel DYnamic Independent Cascade (DYIC) model to predict how influence propagates over time across social items and user groups. Finally, we develop a two-level hash-based User Group Index (UG-Index) to efficiently organise user groups and enable real-time recommendation generation. Extensive experiments on real-world datasets demonstrate that our proposed framework, EIGR, consistently outperforms state-of-the-art baselines in both effectiveness and efficiency.
\end{abstract}

\begin{CCSXML}
<ccs2012>
   <concept>
       <concept_id>10002951.10003227.10003351</concept_id>
       <concept_desc>Information systems~Data mining</concept_desc>
       <concept_significance>500</concept_significance>
       </concept>
   <concept>
       <concept_id>10010147.10010178</concept_id>
       <concept_desc>Computing methodologies~Artificial intelligence</concept_desc>
       <concept_significance>500</concept_significance>
       </concept>
   <concept>
       <concept_id>10002951.10003317.10003347.10003350</concept_id>
       <concept_desc>Information systems~Recommender systems</concept_desc>
       <concept_significance>500</concept_significance>
       </concept>
   <concept>
       <concept_id>10010147.10010178</concept_id>
       <concept_desc>Computing methodologies~Artificial intelligence</concept_desc>
       <concept_significance>500</concept_significance>
       </concept>
 </ccs2012>
\end{CCSXML}

\ccsdesc[500]{Information systems~Recommender systems}

\keywords{GroupGCN, group recommendation, dynamic graph}

\received{01 July 2025}

\maketitle

\section{Introduction}
The rapid proliferation of online platforms has led to a dramatic increase in the volume of social media streams, particularly with the growing prevalence of e-commerce applications. These social media streams often carry critical content such as digital advertisements and event notifications, which are intended to reach a broad audience either directly or through social propagation. The dissemination of such content is significantly influenced by recommendation systems and the social influence of the users receiving these recommendations. This has brought growing attention to the development of influence-aware recommendation techniques.
Influence-aware recommendation plays a crucial role in a variety of applications, including online product promotion and real-time news delivery. For example, an e-commerce platform may distribute digital advertisements to users who are likely not only to make purchases themselves but also to influence their social contacts to engage with the content. In real-world scenarios, social media users are often organised into sub-communities or user groups, which interact with each other through social relationships such as friendship or shared interests. These dynamics highlight the need for influence-aware group recommendation systems that can operate effectively over continuous social media streams.

We study the continuous influence-aware group recommendation over social communities. Given an incoming social item $v$ and historical user groups $\{g_i\}$, we aim to automatically learn an item embedding $e_v$ and the embeddings of user groups $\{e_{gi}\}$, predict the dynamic interests and influences of user groups with respect to $v$, and return a list of user groups $\{g_i\}$ that have the highest probability scores to interact with $v$. For influence-aware group recommendation, three key issues need to be addressed. First, a novel data model is required to well capture the dynamic attributes (e.g., item categories, location, rating, popularity, etc) of items and user groups, as well as the dynamic interactions between incoming items and user groups for effective data representation and group interest prediction. As incoming items are new to a social community, the number of user interactions over them may be small when they are just uploaded into social networks. With the item propagation over social networks, the interactions between the item and its user groups increase, which further causes the changes of item attributes. For example, a product promoted by an influencer may become very popular overnight during its propagation, which leads to the change of its popularity attribute.
A recommender system should be able to handle the interaction sparsity of items and capture the temporal dynamics of item-group interactions and item attributes for high quality recommendation. Second, a novel model is required to well capture both the impact of incoming items on user groups and the dynamics of group influence for effective group influence prediction. While different users may have different influences on the propagation of an item, a user may have different influences on various items. Third, due to the dynamics of user activities, a social media broadcaster may be active in the morning while there may be more online users in the evening. Such user activeness dynamics affect the propagation of online items over different time periods. Thus, it is inappropriate to treat user groups equally and statically for all incoming items. A good influence prediction model should reflect this influence dynamics with respect to the incoming items over time for more accurate influence prediction. 
Finally, efficient indexes are required to organize and search the user group database, reducing the cost of matching the streaming items with social user groups. The number of user groups in social networks is big, while the presentations of groups and items are complex. According to the statistics in 2022, Yelp has more than 178 million unique visitors monthly\footnote{https://review42.com/resources/yelp-statistics/}, forming a big number of user groups by online interactions.
It is essential to avoid unnecessary group-item matching for real time recommendation.
Existing group recommendations are score aggregation-based \cite{DBLP:conf/recsys/BaltrunasMR10} and preference aggregation-based \cite{DBLP:conf/kdd/YuanCL14,DBLP:conf/icde/YinW0LYZ19,DBLP:conf/icde/ZhangGJ021}. However, score aggregation-based methods are inflexible, while preference aggregation-based methods treat all users equally and do not consider the influence difference of users. Recent attention-based approaches~\cite{DBLP:conf/sigir/Cao0MAYH18,DBLP:conf/icde/GuoYW0HC20} take the influence of users as their weights in group preference aggregation. However, they only consider user influence in a static manner, which is inapplicable to applications with dynamic group influence changes. Turning to social streaming, neither existing memory-based  \cite{DBLP:conf/sigmod/HuangCZJX15,DBLP:conf/cikm/SubbianAH16} nor model-based recommendation \cite{DBLP:conf/icde/ZhangLXLY0XCM23,DBLP:conf/icde/ZhouQ0CZ19,DBLP:conf/kdd/ZhangDXDDW22} considers the dynamic influence of groups in recommendation generation. 

Due to the limitations of existing approaches, we propose influence-aware group recommendation (IGR) for social media propagation \cite{DBLP:conf/icdm/HeZ0C0T24}. We first design a Group Graph Convolutional Network (GroupGCN) to learn item and group embeddings by capturing group-level relationships. To model the temporal evolution of group interests, we extend GroupGCN into a sequential architecture named Temporal GroupGCN-RNN-Autoencoder (TGGCN-RA), which integrates recurrent structures and autoencoding mechanisms for effective sequence modelling. Next, we construct a Group Relationship Graph (GREG) to represent inter-group social connections. Based on GREG, we adopt the Independent Cascade (IC) model to simulate influence propagation across user groups. Finally, we generate real-time recommendations by measuring the relevance between incoming items and user groups, and employ optimisation strategies to enhance the efficiency of this process.

As a second step, extending our IGR proposed in \cite{DBLP:conf/icdm/HeZ0C0T24}, we further optimise the model training for faster convergence and better generalisation, introduce novel influence estimation to capture the dynamics of influence propagation, and design a lightweight retrieval mechanism to accelerate the matching process. Specifically, we first develop a sampling-based algorithm, GES, which preserves the original data distribution and effectively captures groups exhibiting interest drift, while minimising redundancy across multiple graphs of GREG. Based on GREG, we then propose a Dynamic Item-aware Information PROpagation Graph (DI²PROG) model to capture the evolving nature of group influence. Leveraging DI²PROG, we introduce the DYnamic Independent Cascade (DYIC) model to simulate influence-based media propagation across groups. Finally, we perform real-time recommendation over social media streams by evaluating the relevance between each incoming item and user group. To support this process efficiently, we design a novel hash-based indexing scheme, UG-Index, which significantly accelerates group matching and recommendation generation.
\eat{Our main contributions are summarized as follows. 
\begin{itemize}
    \item We propose a novel GroupGCN model and its extension  TGGCN-RA to temporal sequences. GroupGCN is robust to the data sparsity and can handle the media dynamics for effective data presentation, while TGGCN-RA enables the accurate group interest prediction for next time point.
    \item We propose a novel DYIC model that well captures the group activeness, group similarity, and propagation willingness to items. It simulates the information propagation over a new dynamic item-aware graph DI$^2$PROG.
    \item We propose a novel GES algorithm that samples the edges of GREG. GES keeps the distribution of the sampled dataset and captures the interest drift of the groups, enabling effective and efficient TGGCN-RA training. 
    \item We design a two-level hash-based index. A set of bidirectionally linked blocks keep the ordered Z-order values generated by locality sensitive hashing (LSH) and group features. A chained hash table keeps the Z-order value positions.
\end{itemize}
}
The early version of this work has been published in \cite{DBLP:conf/icdm/HeZ0C0T24}. Compared to that work, we have made several new
contributions:
\begin{itemize}
    \item We propose a novel GES algorithm that samples the edges of GREG. GES keeps the distribution of the sampled dataset and captures the interest drift of the groups, enabling effective and efficient TGGCN-RA training. 
    \item We propose a novel DYIC model that well captures the group activeness, group similarity, and propagation willingness to items. It simulates the information propagation over a new dynamic item-aware graph DI$^2$PROG.
    \item We design a two-level hash-based index. A set of bidirectionally linked blocks keep the ordered Z-order values generated by locality sensitive hashing (LSH) and group features. A chained hash table keeps the Z-order value positions.
\end{itemize}

The remainder of this paper is organised as follows. Section~\ref{sec:rw} reviews the related work on stream recommendation and group recommendation. Section \ref{sec:mth} introduces our enhanced influence-aware group recommendation framework. Section \ref{sec:exp} presents the experimental evaluation and analysis. Finally, Section \ref{sec:conclusion} concludes the paper and outlines future directions.

\section{Related Work}
\label{sec:rw}
\subsection{Stream Recommendation}
Traditional stream recommendation methods \cite{DBLP:conf/sigmod/HuangCZJX15,DBLP:conf/cikm/SubbianAH16} typically store data in memory to enable real-time recommendation. For instance, TencentRec \cite{DBLP:conf/sigmod/HuangCZJX15} adopts a practical item-based collaborative filtering approach, featuring scalable incremental updates and real-time pruning. However, it suffers from high computational overhead when dealing with large-scale data. To improve efficiency, Subbian et al. \cite{DBLP:conf/cikm/SubbianAH16} employ min-hash to approximate item similarities. Their hash-based data structure enables efficient representation of new rating records, allowing the model to be updated in real time while capturing short-term interest drift. Nevertheless, this method overlooks users’ long-term preferences.
To address these limitations, model-based approaches have been proposed \cite{DBLP:conf/icde/ZhouQ0CZ19,DBLP:conf/kdd/ZhangDXDDW22,DBLP:conf/icde/ZhangLXLY0XCM23}. BiHMM \cite{DBLP:conf/icde/ZhouQ0CZ19} captures both long-term and short-term user interests through probabilistic entity matching. NDB \cite{DBLP:conf/kdd/ZhangDXDDW22} models user attention mechanisms using RNNs and learns a randomly weighted neural network to predict user-item relevance. eLiveRec \cite{DBLP:conf/icde/ZhangLXLY0XCM23} employs disentangled encoders to represent users’ cross-domain and domain-specific intentions, and leverages multi-task learning to capture both intra-channel and inter-channel behaviours.
However, despite their effectiveness, these methods largely overlook the role of user influence in shaping preferences and propagating content, which is critical in social media environments.

\subsection{Group Recommendation}
Existing group recommendation approaches primarily focus on group aggregation, which can be broadly categorised into score aggregation \cite{DBLP:series/sci/BorattoC11,DBLP:conf/recsys/BaltrunasMR10,DBLP:journals/pvldb/Amer-YahiaRCDY09,DBLP:journals/tkde/QinZCHZ20} and preference aggregation \cite{DBLP:conf/kdd/YuanCL14,DBLP:conf/sigir/Cao0MAYH18,DBLP:conf/icde/ZhangGJ021,DBLP:journals/tois/GuoYCZZ22,DBLP:conf/www/WuX0J0ZY23} methods.
Score aggregation combines the individual recommendation lists of group members into a unified group recommendation. Common strategies include average satisfaction \cite{DBLP:series/sci/BorattoC11}, least misery \cite{DBLP:journals/pvldb/Amer-YahiaRCDY09}, and maximum pleasure \cite{DBLP:conf/recsys/BaltrunasMR10}. However, these strategies are heuristic and static, failing to adapt to the evolving structure and shifting interests within user groups.

Preference aggregation, in contrast, models the collective preferences of group members by integrating their individual interests. For example, COM \cite{DBLP:conf/kdd/YuanCL14} builds group profiles based on user preferences and their relative influence. AGREE \cite{DBLP:conf/sigir/Cao0MAYH18} combines attention mechanisms with neural collaborative filtering to learn user-specific weights over items. Zhang et al. \cite{DBLP:conf/icde/ZhangGJ021} construct a heterogeneous graph with initiators, users, and items as nodes, and interactions such as reviews and purchases as edges; a GCN-based model is then used to learn user and item representations.
MGBR \cite{DBLP:conf/icde/ZhaiLYX23} addresses group-buying by decomposing it into two sub-tasks: multi-view embedding learning via GCNs and objective prediction via multi-task learning and MLPs. HyperGroup \cite{DBLP:journals/tois/GuoYCZZ22} builds a hypergraph where users are nodes and groups are hyperedges, learning group representations through hyperedge embeddings. ConsRec \cite{DBLP:conf/www/WuX0J0ZY23} constructs multi-view graphs and uses GNNs to encode group consensus, which are then fused into group-level preferences.
Despite their effectiveness, these preference aggregation-based methods neglect inter-group interactions and fail to model group-level influence, limiting their ability to support information propagation in social communities.
To address this gap, this work targets group recommendation for social media propagation, aiming to dynamically model item-driven group interests and group influence, so that recommended items can be effectively disseminated across social platforms. In addition, we propose optimisation strategies to support efficient model training and real-time recommendation. 


\section{Framework of our solution}
\label{sec:framework}

In broadcasting, the influence can be estimated by the information propagation. 
The more information a node can propagate, the more influence it has.
Especially, we focus on the information which is not only propagated but also accepted.
To model the information propagation, we take the preference of groups and the characteristics of items into consideration. 
\begin{myDef}
\textbf{Information propagation} We define the information propagation as the situation where group $g_i$ propagates information of item $v$ to group $g_j$, and group $g_j$ \textbf{accepts} the information. And the probability of an information propagation is defined as $p_{ij}^v$.
\end{myDef}

\begin{myDef}
Given social network of group $S$ and corresponding attribute features $X$ and $Y$ of groups and items $v$, our model is to predict the relevance score which denotes the preference of group to item and the corresponding influence of group.
\begin{equation}
    \left[\hat{R},Inf\right] = f_v(S,X,Y).
\end{equation}
\end{myDef}
In this work, we recommend each incoming new item to user groups in social networks, so that the influence of the recommendation can be maximized and the recommended groups like the incoming item the most. The main notations used in this paper are listed in Table \ref{tab:notation-full}.

\begin{table}[ht]
\centering
\caption{Notation Table.}
\begin{tabular}{ll|ll}
\hline
\textbf{Notation} & \textbf{Definition} & \textbf{Notation} & \textbf{Definition} \\
\hline
$a$, $b$ & Random integers for hash & $A_i$ & Activeness of group $g_i$ \\
$b^l_e$ & Edge score in layer $l$ & $B$ & Block size in UG-Index \\
$C_i$ & Node cluster in GREG & $d$ & Embedding dimension \\
$e_g$, $e_{g_i}$ & Embedding of group $g$ & $e_v$ & Embedding of item $v$ \\
$E_{ij}$ & Edges between clusters $C_i$ and $C_j$ & $F_i(a)$ & FastMap projection coordinate \\
$G_t$ & Sampled subgraph at time $t$ & $G_{\text{new}}$ & Groups recently interacted with $v$ \\
$\mathcal{G}_s$ & Group relationship graph (GREG) & $g$, $g_i$, $g_j$ & User group \\
$H(k)$ & Hash function on Z-order $k$ & $H^l$ & Weight matrix for item at layer $l$ \\
$I_u|v$ & Indicator: 1 if $u$ sampled given $v$ & $k$, $Z_i$ & Z-order value \\
$\lambda$ & Regularisation parameter & $\mathcal{L}$ & BPR loss function \\
$N_g$ & Neighbours of group $g$ & $p$ & Prime number for hashing \\
$p_e$ & Sampling probability of edge $e$ & $p_{uv}$ & Sampling prob. for edge $(u,v)$ \\
$p^v_{ij}$ & Propagation prob. from $g_i$ to $g_j$ & $p_v$ & Sampling prob. for node $v$ \\
$p_g$, $q_v$ & Latent vectors of group/item & $pos$ & Position of key in hash blocks \\
$r_{g,v}$ & Relevance score of $g$ to $v$ & $R_i$ & Group-item interactions at $t_i$ \\
$sim(g_i, g_j)$ & Similarity between groups & $\sigma(\cdot)$ & Activation function \\
$s_{ij}$ & Sampled edge set between $C_i$ and $C_j$ & $\theta$, $\theta^l_v$ & Unbiased estimator \\
$\tau$ & Propagation threshold & $T$ & Number of hash buckets \\
$\Theta_1$ & Trainable parameters & $V_g$ & Items historically interacted by $g$ \\
$v$ & A social item & $W^l$ & Weight matrix for group at layer $l$ \\
$W^v_i$ & Willingness of group $g_i$ to $v$ & $x_g$, $y_v$ & Attribute features of group/item \\
$\alpha_r$ & Preference-influence trade-off & $\alpha_v$ & Dynamics factor for $e_v$ update \\
$\beta^l$ & Bias for item encoder & $\delta_o$ & Overlap degree in GES \\
$\gamma_1$, $\gamma_2$ & Influence trade-off weights & $\hat{A}_{uv}$ & Laplacian norm coefficient \\
\hline
\end{tabular}
\label{tab:notation-full}
\end{table}

\section{The EIGR Model}
\label{sec:mth}

We propose an enhanced influence-aware group recommendation framework (EIGR) as shown in Fig. \ref{fig:framework}. The EIGR framework includes three main components: the data representation, the group influence prediction, and the group recommendation generation. 
In data representation, we leverage IGR \cite{DBLP:conf/icdm/HeZ0C0T24} to represent items and user groups, and predict the current interests of groups. To accelerate the training process, we propose a novel sampling-based strategy.
The representations of a given item and group are transferred to other components for group influence prediction and group recommendation generation. 
In group influence prediction, a group interaction graph is first constructed based on the friendship between different users in two investigated groups and the common users they share. Then, the influence of each given group and its propagation with respect to an incoming item are estimated with the support of the dynamically updated group interaction graph. 
In group recommendation generation, the group-item matching provides an influence-aware matrix factorization-based ranking function between a stream item and a user group based on the predicted group interests. A UG-Index filtering is applied to exclude the irrelevant groups. 

\begin{figure}
    \centering
    \includegraphics[width=0.7\textwidth]{ 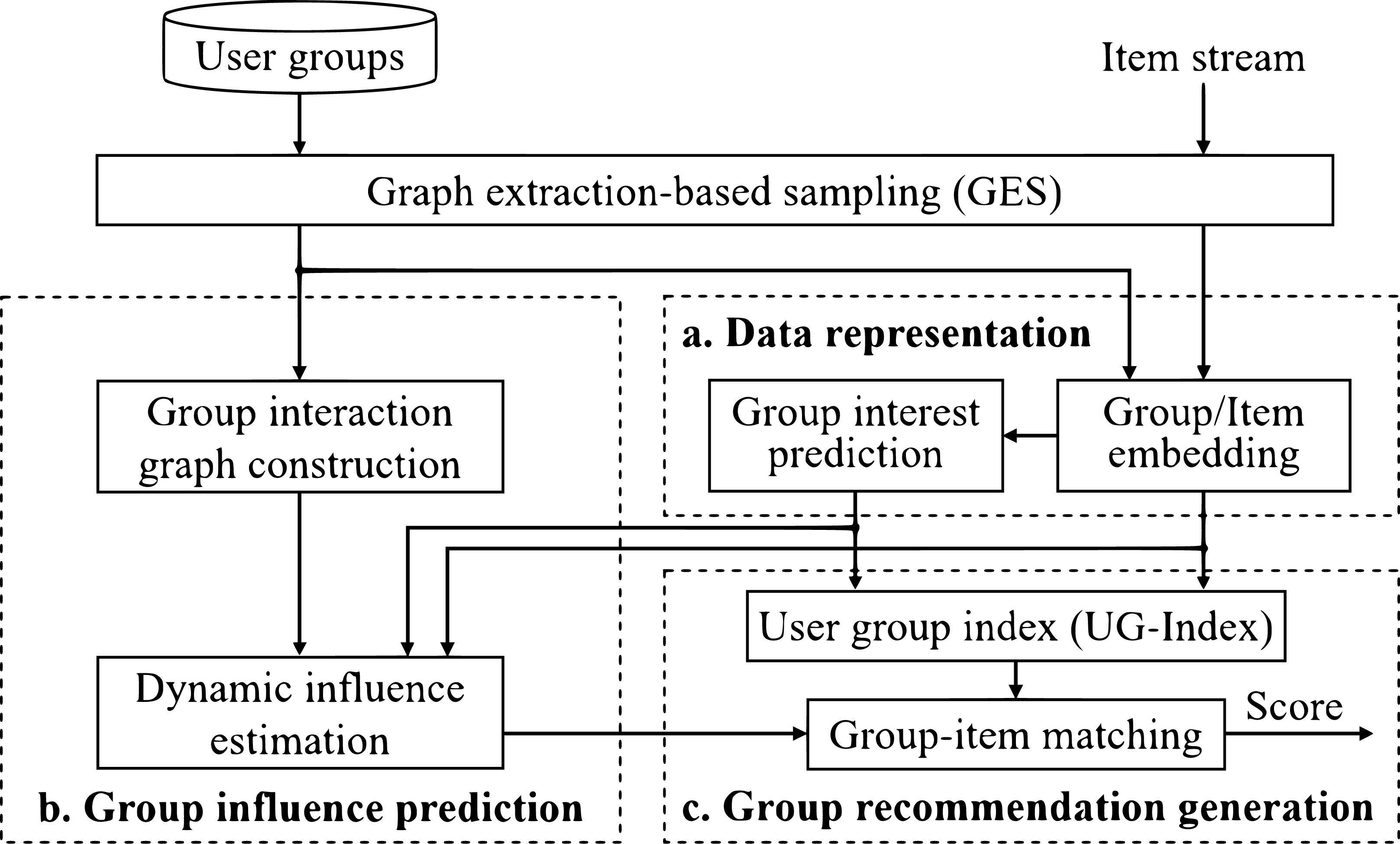}\vspace{-0ex}
    \caption{EIGR framework.}
    \Description{Diagram of the EIGR framework.}
    \label{fig:framework}\vspace{-0ex}
\end{figure}

\subsection{Data Representation}
New social users may join a group and existing users may leave anytime. The interactions between a group and an incoming item happen frequently. Thus, user interests may change over time. An incoming item has sparse interactions with users. Due to its interactions with users during its propagation, the item attributes (e.g., popularity) could change. 
Thus, our solution \cite{DBLP:conf/icdm/HeZ0C0T24} leverages the influence-aware GroupGCN to represent groups and items and extends it to TGGCN-RA to predict the group interests. We first briefly review the process of data modelling.

GroupGCN comprises three core components: the initialization layer, embedding propagation layer, and relevance prediction layer. Given a group-item pair $(g, v)$, the initialization layer first constructs latent vectors $p_g, q_v \in \mathbb{R}^d$ through hypergraph-based aggregation over group members' preferences. Attribute feature vectors $x_g, y_v \in \mathbb{R}^{\bar{d}}$ are extracted by applying Word2Vec to group/item tags and averaging the resulting tag embeddings. The initial embeddings are obtained via:
\begin{equation}
    e^0_g = \sigma(W^0\cdot [p_g,x_g]+b^0), \quad e^0_v = \sigma(H^0\cdot [q_v,y_v]+\beta^0),
\end{equation}
where $[\cdot]$ denotes concatenation, $W^0$, $H^0$ are transformation matrices, $b^0$, $\beta^0$ are biases, and $\sigma(\cdot)$ is a non-linear activation function.
To refine group embeddings, the embedding propagation layer applies graph convolution over the group relationship graph $\mathcal{G}_s$, capturing high-order structural dependencies. The neighbour aggregation is defined as:
\begin{equation}
agg_g^{l-1} = \sum_{g'\in N_g} {e^{l-1}_{g'}}/{\sqrt{\left|N_g\right|\left|N_{g'}\right|}},
\end{equation}
and the embedding update at each layer is given by:
\begin{equation}
    e^l_g = \sigma(W^l\cdot [e^{l-1}_g,agg^{l-1}_g]),\quad l\in[1,L].
\end{equation}
The final group embedding is $e_g = e^L_g$. For an item $v$, its final embedding is updated by integrating new group interactions:
\begin{equation}
\label{eq:ev}
    e_v = \alpha_v e^0_v + \frac{1-\alpha_v}{\left|G_{n\!e\!w}\right|}\sum_{g\in G_{n\!e\!w}}e_g,
\end{equation}
where $G_{\text{new}}$ denotes the set of groups recently interacting with $v$, and $\alpha_v \in [0,1]$ is a dynamics trade-off coefficient.
The relevance prediction layer computes the interaction score between a group $g$ and an item $v$ as:
\begin{equation}
\label{eq:rel_pred}
    \hat{r}_{g,v} = (1-\alpha_r) e^T_v(e_g+\left|V_g\right|^{-\frac{1}{2}}\sum_{v_i\in V_g}{e_{v_i}}) + \alpha_r\hat{s}^v_g,
\end{equation}
where $V_g$ is the set of items historically interacted with by $g$, $\hat{s}^v_g$ is the predicted item-aware group influence, and $\alpha_r$ controls the balance between preference and influence.
GroupGCN is trained using the pairwise Bayesian Personalized Ranking (BPR) loss:
\begin{equation}
    L = -\sum_{g\in G} \sum_{(u,v)\in O_g}\ln{\sigma(\hat{r}_{g,u}-\hat{r}_{g,v})} + \lambda_1\Vert\Theta_1\Vert^2_2,
\label{eq:Loss_1}
\end{equation}
where $\mathcal{O}_g = {(u, v) \mid u \in V_g, v \in \widetilde{V}_g}$ is the group-specific pairwise training set with negative samples $\widetilde{V}_g$, $\Theta_1$ denotes trainable parameters, and $\lambda_1$ is a regularisation coefficient. The model is optimised using the Adam algorithm.

To model the temporal dynamics of group interests, TGGCN-RA constructs a sequence of training quadruples $\langle R_i, V_i, G, G_s \rangle$, each containing group-item interactions up to a specific time point. A series of GroupGCNs (GGCNs) are trained on these time-based subsets to capture long-term group interests. To model short-term preferences, each GGCN is further fine-tuned using only the newly observed interactions. At each time point, the group profile is formed by concatenating the corresponding long-term and short-term embeddings.
The resulting sequence of group profiles is then fed into a recurrent neural network (RNN) to capture the temporal patterns of interest evolution. At each step, the RNN updates its hidden state based on the current group profile and past states. To predict future group interests, TGGCN-RA employs an RNN-based autoencoder, where the encoder models historical interest evolution and the decoder generates the next-step embedding.
The model is trained using a mean squared error loss between the predicted and actual group embeddings, with regularisation applied to avoid overfitting. The predicted embedding serves as the basis for estimating group influence and generating recommendations.

\subsection{Optimised TGGCN-RA Training.}
Naive way directly applies the pair-wise mini-batch training strategy over all the interaction records, each is described as a quadruple. However, training a GGCN incurs a high time cost for large graphs that have a huge number of interaction records. In addition, during the training, the convolutional operations are performed over the whole graph, and the number of involved nodes in each convolutional operation could increase exponentially as the TGGCN-RA depth grows. Thus, directly training the model over the whole batch becomes costly in practice. 

To reduce the training cost, distributed GNN processing may be a remedy \cite{DBLP:journals/pvldb/MinWHHXECH21,DBLP:journals/pvldb/ZhuZYLZALZ19,DBLP:conf/sc/TripathyYB20,DBLP:journals/pvldb/PengCSSCC22}. In these models, graph nodes are stored on different devices and training is conducted in parallel. Specifically, the training process first learns information like embeddings, gradients, or model parameters over each device, and then schedules the data communication among devices. Generally, distributed GNN models can be categorized into centralized and decentralized. In centralized models \cite{DBLP:journals/pvldb/MinWHHXECH21,DBLP:journals/pvldb/ZhuZYLZALZ19}, devices periodically send data updates to the central server. For each training iteration, the server updates the model after collecting data from all devices. However, the heavy preprocessing and complex workflow incur high costs. Moreover, variations in computation costs across devices due to data communication can hinder the speed of parallelization. Decentralized models CAGNET \cite{DBLP:conf/sc/TripathyYB20} and Sancus \cite{DBLP:journals/pvldb/PengCSSCC22} overcome the heavy preprocessing \cite{DBLP:conf/sc/TripathyYB20} and low parallelization effectiveness \cite{DBLP:journals/pvldb/PengCSSCC22}. Each of their devices generates parameters and embeddings, exchanging them directly with other devices. Sancus \cite{DBLP:journals/pvldb/PengCSSCC22} adopts a broadcast skipping to reduce the communication and training cost. All these methods assume that the graph has no redundancy. However, TGGCN-RA is trained over multiple temporally related graphs that share common or similar nodes and edges. A direct adaptation of distributed GCN models cannot alleviate the redundancy across different graphs. Given a graph at time $t-1$, some groups may not receive or only have a few new interactions when the graph evolves until the next time point $t$. Thus, the corresponding nodes and edges in the GREG remain unchanged or undergo minimal changes, leading to redundant training. This redundancy across multiple graphs cannot be handled by distributed GCN models.

Another line of research for improving training efficiency is sampling-based \cite{chen2017stochastic,chen2018fastgcn,zeng2019graphsaint,chiang2019cluster}, including layer sampling and subgraph sampling. The layer sampling-based approaches \cite{chen2017stochastic,chen2018fastgcn} build a GCN over the whole graph, sample nodes or edges from the graph to form the mini-batches, and train the model over mini-batches. On the other hand, the subgraph sampling-based methods \cite{zeng2019graphsaint,chiang2019cluster} construct subgraphs by sampling nodes and edges from the whole graph, form the mini-batches from subgraphs, and train the model using these mini-batches. Similar to distributed GNN models, these methods also assume that the graph contains no redundancy.
In addition, they cannot capture the temporal relationship between multiple graphs, and their predefined constraints or sampling strategies are very complex, which is not suitable for streaming data.

To overcome the problems of existing strategies, we need to minimize the redundancy across multiple graphs and capture the dynamics of groups and items over the timeline in an efficient way. We have two challenges. First, the distribution of the sampled dataset may be different from that of the original one due to the removal of edges in GREG during sampling. Thus, the effectiveness of the model may be downgraded due to this data distribution change. Second, with sampling, the nodes of the original GREG could be removed, which removes all the interactions of these unselected nodes at different time points, leading to the loss of groups with interest drift. This affects the model's ability to capture the dynamics of groups. We propose a novel Graph Extraction-based Sampling (GES) strategy that maintains the distribution of the sampled dataset and well captures the groups with interest drift.

\begin{algorithm}[t]
\caption{GES: Group-aware Edge Sampling}
\label{alg:smpAlg}
\KwIn{Interaction sets $\{R_t\}$; GREG $G_s$; overlap degree $\delta_o$}
\KwOut{Subgraphs $\{G_t\}$}

$\{C_i\} \leftarrow \text{ConstructCluster}(G_s)$\;

$\{E_{ij}\} \leftarrow \text{ConstructEdgeSet}(G_s, \{C_i\})$\;

Initialize $\{num_t\}$\;

\For{$t \in [1, T]$}{
    \ForEach{$E_{ij}$ at $t$}{
        \eIf{$t = 1$}{
            $s_{ij} \leftarrow \text{SampleEdge}(E_{ij})$\;
        }{
            $\hat{s}_{ij} \leftarrow \text{SampleOverlapEdge}(E_{ij}, G_{t-1}, R_{t-1}, R_t, \delta_o)$\;
            $s_{ij} \leftarrow \text{SampleEdge}(E_{ij}, G_{t-1}, \delta_o)$\;
        }
    }
    $S_t \leftarrow s_{11} \cup \cdots \cup s_{CC} \cup \hat{s}_{11} \cup \cdots \cup \hat{s}_{CC}$\;
    $G_t \leftarrow \text{GenerateSubgraph}(G_s, S_t)$\;
}

\Return $\{G_t\}$\;
\end{algorithm}

Given interaction sets $\{R_t\}$ and a GREG $G_s$, GES generates a subgraph at each time point. Alg. \ref{alg:smpAlg} shows GES performed in two steps. The first step is sampling preparation (lines 1-3). We first cluster the graph nodes into $K_c$ groups using K-Means++. The nodes in different groups are sampled in the same proportion, which maintains the same data distribution (line 1). For any two clusters $C_i$ and $C_j$ ($i, j \leq K_c$), we keep all the edges between them in an edge set $E_{ij}$ (line 2). As such, a set of edge sets are formed for all the node groups. We initialize the number of samples for each cluster (line 3). The second step is sampling over each edge set (lines 4-12). If an edge set is to the first time point ($t=1$), we sample its edges based on the degrees of nodes linking each edge (lines 4-7). Otherwise, we first sample $\sigma_o*num_t$ overlapping edges from two edge sets at two adjacent time points (line 9). Here, two edges from two edge sets sharing common nodes at both sides of the edges are overlapping. If the interactions of two overlapping edges have changed, $\sigma_o*num_t$ edges with the highest degrees are extracted, where $\sigma_o$ is the proportion of overlapping edges. Then, $(1-\sigma_o)*num_t$ edges are sampled as performed for $t=1$ (line 10). Finally, the sampled edges are used to construct the subgraph $G_t$ for training the model at the time point $t$ (lines 11-12). Given a GREG $G_s$ with $N$ nodes and $N_e$ edges and the number of clusters $K_c$, the time cost of GES is $\mathcal{O}(N*K_c+N_e)$.
With GES, we achieve both node and edge level unbias, and ensure that each subgraph captures groups with interest drift.
To ensure that edge samplers are unbiased and information loss is minimized, we design unbiased estimators and minimize the variance. Given a node $v$, we form the propagation at layer $l$:
\begin{equation}
    x^{l}_v=\sigma(\sum_{u\in N_v}\hat{L}_{uv}W^{(l-1)}x^{(l-1)}_u),
\end{equation}
where $\hat{L}_{uv}={1}/{\sqrt{\left|N_v\right|\left|N_{u}\right|}} $ is graph Laplacian norm and $\sigma$ is activation function.
Let $\bar\theta^{(l-1)}_v = \sum_{u\in N_v}\hat{L}_{uv}W^{(l-1)}x_u^{(l-1)}$, we design unbiased estimator $\theta_v^{(l-1)}$, holding the condition: $E(\theta_v^{l}) = \bar\theta^{(l-1)}_v$.
Here, $E(\theta_v^{l})$ is the expectation of $\theta_v^{l}$. The unbiased estimator for a node $v$ is computed as:
\begin{equation}
    \theta_v^{l} = \sum_{u\in N_v}(\hat{L}_{uv}/\alpha_{uv})\hat{x}^{(l-1)}_u\mathbb{I}_{u|v},
\label{eq:est_node}
\end{equation}
where $\hat{x}^{(l-1)}_u=W^{(l-1)}x^{(l-1)}_u$, $\alpha_{uv}={p_{uv}}/{p_v}$ is the aggregator normalization to guarantee the unbiased estimator, $p_{u,v}$ is the probability of an edge $(u, v)$ being sampled in a subgraph, $p_{v}$ is the probability of a node $v$ being sampled,
and $\mathbb{I}_{u|v}\in \{0,1\}$ is the indicator function (when $u$ is sampled, $\mathbb{I}_{u|v}=1$; otherwise, $\mathbb{I}_{u|v}=0$). Then we can define the unbiased estimator for one subgraph $G_s$ as below:
\begin{equation}
    {\theta}=\sum_l\sum_{v\in G_s}(\theta_v^{l}/p_v)=\sum_l\sum_{e\in G_s}(b^{l}_{e}/p_e)\mathbb{I}_{e}^{l},
\label{eq:est_graph}
\end{equation}
where $p_{e}$ is the probability of an edge $e$ being sampled in a subgraph, $\mathbb{I}_{e}^{l}=1$ if the edge $e$ is sampled; otherwise, $\mathbb{I}_{e}^{l}=0$, and $b^{l}_e=\hat{L}_{uv}(\hat{x}^{(l-1)}_u+\hat{x}^{(l-1)}_v)$. 
\begin{theorem}
   $\theta_v^{l}$ is an unbiased estimator of a node $v$ and $\theta$ is an unbiased estimator of a subgraph $G_s$.
\label{th:bias}
\end{theorem}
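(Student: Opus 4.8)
The plan is to prove the two claims in sequence, handling the node-level estimator first and then bootstrapping to the subgraph-level estimator. In both cases everything reduces to linearity of expectation once the expectations of the sampling indicators are correctly identified; the only genuine work is pinning down what those indicator expectations are under the nested edge/node sampling of GES.

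First I would establish node-level unbiasedness. The only randomness in $\theta_v^l$ enters through the indicators $\mathbb{I}_{u|v}$, so by linearity of expectation (conditioned on $v$ being retained in the subgraph),
\begin{equation}
E(\theta_v^l) = \sum_{u\in N_v}\frac{\hat{L}_{uv}}{\alpha_{uv}}\hat{x}_u^{(l-1)}\,E(\mathbb{I}_{u|v}).
\end{equation}
The key identification is that $\mathbb{I}_{u|v}$ is the event ``$u$ is retained given that $v$ is retained'', so its expectation is exactly the conditional inclusion probability $E(\mathbb{I}_{u|v}) = p_{uv}/p_v = \alpha_{uv}$. Substituting this cancels the aggregator normalisation $\alpha_{uv}$ term by term, leaving $\sum_{u\in N_v}\hat{L}_{uv}\hat{x}_u^{(l-1)} = \sum_{u\in N_v}\hat{L}_{uv}W^{(l-1)}x_u^{(l-1)} = \bar\theta_v^{(l-1)}$, which is precisely the defining condition $E(\theta_v^l)=\bar\theta_v^{(l-1)}$. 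This also explains \emph{why} the normalisation $\alpha_{uv}=p_{uv}/p_v$ is the right choice: it is exactly what debiases the edge sampler.

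Next I would handle the subgraph estimator, for which I prefer the edge form $\theta=\sum_l\sum_{e\in G_s}(b^l_e/p_e)\mathbb{I}_e^l$. Here the only randomness is the edge indicator $\mathbb{I}_e^l$ with $E(\mathbb{I}_e^l)=p_e$, so linearity gives $E(\theta)=\sum_l\sum_e b_e^l$ and the factor $p_e$ cancels cleanly. It then remains to recognise this as the full-graph target: substituting $b^l_e=\hat{L}_{uv}(\hat{x}^{(l-1)}_u+\hat{x}^{(l-1)}_v)$ and using the symmetry $\hat{L}_{uv}=\hat{L}_{vu}$, I would rewrite the sum over undirected edges as a sum over ordered neighbour pairs, so that $\sum_e b^l_e=\sum_v\sum_{u\in N_v}\hat{L}_{uv}\hat{x}_u^{(l-1)}=\sum_v\bar\theta_v^{(l-1)}$. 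Hence $E(\theta)=\sum_l\sum_v\bar\theta_v^{(l-1)}$, the true layer-wise propagation aggregated over the whole graph, confirming unbiasedness.

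The step I expect to be the \emph{main obstacle} is reconciling the edge form with the node form $\theta=\sum_l\sum_{v\in G_s}\theta_v^l/p_v$, since there the outer summation ranges over sampled nodes and the inner estimator $\theta_v^l$ is itself conditional on $v$ being sampled. A naive substitution of $E(\theta_v^l)=\bar\theta_v^{(l-1)}$ would leave a spurious factor $1/p_v$; the correct argument invokes the law of total expectation, writing $E(\theta_v^l\mathbb{I}_v)=p_v\,E(\theta_v^l\mid v\text{ sampled})=p_v\bar\theta_v^{(l-1)}$, so the external $1/p_v$ exactly cancels the node-inclusion probability and the two forms agree. I would therefore first verify the algebraic identity between the two forms, using $\hat{L}_{uv}/\alpha_{uv}=\hat{L}_{uv}\,p_v/p_{uv}$ together with the double counting of each undirected edge, and then present the edge-form computation as the clean proof, with the node form serving as a consistency check.
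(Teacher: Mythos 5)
Your node-level argument coincides with the paper's: linearity of expectation, the identification $E(\mathbb{I}_{u|v})=p_{uv}/p_v=\alpha_{uv}$, and the term-by-term cancellation against the aggregator normalisation to land on $\bar\theta_v^{(l-1)}$ (the paper reaches the same identification through a detour via a ``posterior probability'' integral, but the content is identical). Where you genuinely diverge is the subgraph claim. The paper works from the node form $\theta=\sum_l\sum_{v\in G_s}\theta_v^{l}/p_v$ and dispatches it in one line by appealing to ``the property of a discontinuous variable'', which implicitly means that the node-inclusion indicator has expectation $p_v$, so the external $1/p_v$ cancels --- exactly the cancellation you make explicit with the law of total expectation, $E(\theta_v^{l}\,\mathbb{I}_v)=p_v\,\bar\theta_v^{(l-1)}$. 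You instead take the edge form of Eq.~\ref{eq:est_graph}, use $E(\mathbb{I}_e^{l})=p_e$ so that $E(\theta)=\sum_l\sum_e b^l_e$, and then verify the combinatorial identity $\sum_e b^l_e=\sum_v\sum_{u\in N_v}\hat{L}_{uv}\hat{x}_u^{(l-1)}$ via the symmetry $\hat{L}_{uv}=\hat{L}_{vu}$ and double counting of each undirected edge. This buys you two things the paper's version lacks: a proof (rather than an assertion) that the node form and edge form of $\theta$ in Eq.~\ref{eq:est_graph} agree, and a target $\sum_l\sum_v\bar\theta_v^{(l-1)}$ that is unambiguously the full-graph quantity; the paper's final display even writes $E(\theta)=\sum_l\sum_{v}\theta_v^{l}=\bar\theta$ with the random $\theta_v^{l}$ where the deterministic $\bar\theta_v^{(l-1)}$ should appear, a slip your formulation avoids. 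The cost is only minor extra bookkeeping; both routes are correct, and yours is more careful precisely at the step where the paper is most terse.
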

\begin{proof}
    Based on the property of expected value, we have:
    \begin{equation}
        E(\theta_v^{l})
        =\sum_{u\in N_v}\frac{\hat{L}_{uv}}{\alpha_{uv}}\hat{x}^{(l-1)}_u E(\mathbb{I}_{u|v}).
    \end{equation}
    Based on the property of indicator function, we have:
    \begin{equation}
    E(\theta_v^{l}) =\sum_{u\in N_v}\frac{\hat{L}_{uv}}{\alpha_{uv}}\hat{x}^{(l-1)}_u \int_{u,v}\mathbb{I}_{u|v}dP,
    \end{equation}
    where $P$ is the posterior probability of sampling the neighbour node $u$ of a given node $v$. Based on the definition of conditional probability, we have:
   \begin{equation}
    E(\theta_v^{l}) 
        =\sum_{u\in N_v}({\hat{L}_{uv}}/{\alpha_{uv}})\hat{x}^{(l-1)}_u ({p_{uv}}/{p_v}).
    \end{equation}
    Based on our definition $\alpha_{uv}={p_{uv}}/{p_v}$, we have:
    \begin{equation}         
    E(\theta_v^{l}) = \sum_{u\in N_v}{\hat{L}_{uv}}\hat{x}^{(l-1)}_u=\bar\theta_v^{l}.
    \end{equation}
    Thus, we can conclude that the estimator $\theta_v^{l}$ of node $v$ is unbiased. Next, we prove that $\theta$ is unbiased. Based on the definition of $\theta$ in Eq. \ref{eq:est_graph}, similarly, we have:
    \begin{equation}        
        E(\theta)=E(\sum_l\sum_{v\in G_s}{\theta_v^{l}}/{p_v}).
    \end{equation}
    Based on the property of a discontinuous variable, we have:
    \begin{equation}
        E(\theta)=\sum_l\sum_{v\in G_s}{\theta_v^{l}}=\bar\theta.
    \end{equation}
    Thus, the estimator $\theta$ of subgraph $G_s$ is unbiased.
\end{proof}

Sampling incurs information loss, which impacts the model quality. We minimize the variance of an estimator. By achieving the smallest possible variance, we ensure minimal deviation between the estimated value and the ``true'' value, as measured by $L_2$ norm\footnote{https://en.wikipedia.org/wiki/Efficiency\_(statistics)}. Next, we prove $\frac{n_s}{\sum_{e'}{\Vert\sum_l b^{l}_{e'} \Vert}}\Vert\sum_l b^{l}_{e} \Vert$ is the optimal sampling probability for a given edge to achieve the minimal variance of the estimator.
\begin{theorem}
Given a graph $G$ and a sampling number $n_s=\sum{p_e}$, the variance of estimator $\theta$ is minimized when $p_e=\frac{n_s}{\sum_{e'}{\Vert\sum_l b^{l}_{e'} \Vert}}\Vert\sum_l b^{l}_{e} \Vert$.
\label{th:var}
\end{theorem}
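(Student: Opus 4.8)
The plan is to reduce the statement to a constrained optimisation over the sampling probabilities $\{p_e\}$ and then solve it with a standard Lagrange-multiplier (equivalently, Cauchy--Schwarz) argument. First I would fix the reading that each edge $e$ enters the subgraph through a single independent Bernoulli trial with success probability $p_e$, so that the layer indicators coincide, $\mathbb{I}_e^l=\mathbb{I}_e$ for all $l$, and the estimator of Eq.~\ref{eq:est_graph} collapses to $\theta=\sum_e(\mathbb{I}_e/p_e)\,B_e$ with $B_e:=\sum_l b^l_e$. Because $\theta$ is vector-valued, I would measure its dispersion by the total variance $\mathrm{Var}(\theta)=E\Vert\theta-E\theta\Vert^2$, which by Theorem~\ref{th:bias} equals $E\Vert\theta-\bar\theta\Vert^2$.

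Next I would compute this variance explicitly. Using independence of the edge draws, the cross terms vanish and only the diagonal contributions survive, giving
\begin{equation}
\mathrm{Var}(\theta)=\sum_e \Vert B_e\Vert^2\,\mathrm{Var}\!\left(\frac{\mathbb{I}_e}{p_e}\right)=\sum_e \Vert B_e\Vert^2\,\frac{p_e(1-p_e)}{p_e^2}=\sum_e \frac{\Vert B_e\Vert^2}{p_e}-\sum_e \Vert B_e\Vert^2 .
\end{equation}
The second sum is independent of the $p_e$, so minimising the variance is equivalent to minimising $\sum_e \Vert B_e\Vert^2/p_e$ subject to the budget constraint $\sum_e p_e=n_s$.

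Then I would solve this convex program. Introducing a multiplier $\mu$ and differentiating $\sum_e \Vert B_e\Vert^2/p_e+\mu(\sum_e p_e-n_s)$ with respect to each $p_e$ yields $-\Vert B_e\Vert^2/p_e^2+\mu=0$, i.e. $p_e\propto\Vert B_e\Vert$. Substituting this proportionality back into $\sum_e p_e=n_s$ fixes the constant and gives exactly $p_e=\frac{n_s}{\sum_{e'}\Vert\sum_l b^l_{e'}\Vert}\,\Vert\sum_l b^l_e\Vert$, as claimed. Equivalently, I could bypass the multiplier and invoke Cauchy--Schwarz, $(\sum_e \Vert B_e\Vert)^2\le(\sum_e \Vert B_e\Vert^2/p_e)(\sum_e p_e)$, with equality precisely when $p_e\propto\Vert B_e\Vert$; this route is cleaner for certifying that the stationary point is the global minimiser over the feasible set $\{p_e>0:\sum_e p_e=n_s\}$ rather than merely a critical point.

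The main obstacle I anticipate is not the optimisation itself --- that is routine --- but justifying the variance formula, specifically the assumption that the edge indicators are independent (so the covariance cross terms drop) and that the per-layer indicators may be identified with a single edge-inclusion event. Under a joint subgraph sampler these indicators need not be independent, so I would either state independence as the sampling model adopted by GES, in line with subgraph-sampling estimators such as \cite{zeng2019graphsaint}, or, if correlations are retained, carry the covariance matrix through and argue that the diagonal term is the dominant, optimisable part while the off-diagonal terms contribute only a lower-order correction. A secondary point to handle carefully is the vector nature of $b^l_e$: the objective must be phrased in terms of $\Vert\sum_l b^l_e\Vert$ rather than a scalar magnitude, and I would verify that summing the componentwise variances indeed reproduces the $\Vert B_e\Vert^2$ factor used above.
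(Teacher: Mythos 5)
Your proposal is correct and follows essentially the same route as the paper's own proof: both derive $\mathrm{Var}(\theta)=\sum_e \Vert\sum_l b^l_e\Vert^2/p_e - \sum_e\Vert\sum_l b^l_e\Vert^2$ (the paper keeps per-layer indicators and computes the cross-layer covariance $p_e-p_e^2$, which is exactly your identification $\mathbb{I}_e^{l_1}=\mathbb{I}_e^{l_2}=\mathbb{I}_e$ made explicit), and both then apply Cauchy--Schwarz under the budget constraint $\sum_e p_e=n_s$ to conclude $p_e\propto\Vert\sum_l b^l_e\Vert$. The independence-across-edges assumption you flag as the main obstacle is simply asserted in the paper ("each pair of different edges is sampled independently"), so your treatment is, if anything, more careful on that point.
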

\begin{proof}
    Let $Var(\theta)$ be the variance of $\theta$. By Eq. \ref{eq:est_graph}, we have:
\begin{equation}
Var(\theta)=Var(\sum_l\sum_{e\in G_s}({b^{l}}/{p_e})\mathbb{I}_e^{l}).
\end{equation}
Based on the property of variance, we have:
\begin{gather*}
Var(\theta)=\sum_{l_1,l_2}Cov(\sum_e(\frac{b^{l_1}_e}{p_e})\mathbb{I}_e^{l_1},\sum_e(\frac{b^{l_2}_e}{p_e})\mathbb{I}_e^{l_2})\\
=\sum_l(\sum_e{Var(\frac{b^{l_1}_e}{p_e}) \mathbb{I}_e^{l})}+\sum_{e_1\neq e_2}Cov((\frac{b^{l}_{e_1}}{p_{e_1}}) \mathbb{I}_{e_1}^{l},(\frac{b^{l}_{e_2}}{p_{e_2}}) \mathbb{I}_{e_2}^{l}))\\
+ \sum_{l_1\neq l_2}(\sum_{e_1,e_2}\frac{b^{l_1}_eb^{l_2}_e}{p_{e_1}p_{e_2}}Cov(\mathbb{I}_{e_1}^{l_1},\mathbb{I}_{e_2}^{l_2}))
\end{gather*}
where $Cov(\cdot,\cdot)$ is the covariance and we have:
\begin{equation}
    Cov( \mathbb{I}_{e_1}^{l_1},\mathbb{I}_{e_2}^{l_2})=0,
\label{eq:cov0}
\end{equation}
where $e_1\neq e_2$, because each pair of different edges is sampled independently. Given an edge $e$ at different layers ($l_1\neq l_2$), the indicator functions, $\mathbb{I}_{e}^{l_1}$ and $\mathbb{I}_{e}^{l_2}$, that show if $e$ is sampled, are dependent, since they are related to the same edge. Based on the property of the indicator function, we have:
\begin{equation}
    Cov(\mathbb{I}_{e}^{l_1},\mathbb{I}_{e}^{l_2})=p_e - p_e^2.
    \label{eq:cov1}
\end{equation}
Based on Eq. \ref{eq:cov0} and \ref{eq:cov1}, we have $Var(\theta)$ as below:
\begin{gather*}
\sum_{e,l}({b^{l}_e}/{p_e})^2 
Var(\mathbb{I}_e^{l})+\sum_{e,l_1\neq l_2}({b^{l_1}_eb^{l_2}_e}/{p_e^2})Cov(\mathbb{I}_e^{l_1},\mathbb{I}_e^{l_2})\\
=\sum_e{(\sum_l b^{l}_e)^2}/{p_e} - \sum_e(\sum_l b_e^{l})^2
\end{gather*}
Based on Cauchy-Schwarz inequality, we have:
\begin{equation}
    \sum_e{((\sum_l b^{l}_e)^2}/{p_e})\sum_e p_e\geq(\sum_{e,l}b^{l}_e)^2.
\end{equation}
When $\Vert\frac{\sum_l b^{l}_e}{\sqrt{p_e}}\Vert\propto\sqrt{p_e}$ or $ p_e\propto\Vert{\sum_l b^{l}_e}\Vert$, the equality is achieved.
Thus, the variance is minimized when:
\begin{equation}
\label{eq:pe}
    p_e= \frac{n_s}{\sum_{e'} \Vert\sum_l b^{l}_{e'}\Vert}\Vert\sum_l b^{l}_{e}\Vert.
\end{equation}
\end{proof}

By Theorem \ref{th:var}, we sample a subgraph with optimal variance from edge sets at $t=1$. We extend the sampling of the subgraph at $t=1$ to overlapping sampling at later time points, and achieve optimal variance for subgraph $G_t$ by Theorem \ref{th:var2}.
\begin{theorem}
    The variance of estimator $\theta_t$ for subgraph $G_t$ is minimized when Theorem \ref{th:var} holds for both overlapping and non-overlapping samplers.
\label{th:var2}
\end{theorem}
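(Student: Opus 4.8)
The plan is to reduce Theorem~\ref{th:var2} to Theorem~\ref{th:var} by decomposing the estimator $\theta_t$ according to the two independent samplers that construct $G_t$. Recall that for $t>1$ each edge set $E_{ij}$ is split into an overlapping pool and a non-overlapping pool, from which lines~9 and~10 of Alg.~\ref{alg:smpAlg} draw $\delta_o\!\cdot\! num_t$ and $(1-\delta_o)\!\cdot\! num_t$ edges respectively. Writing $\theta_t=\theta_t^{ov}+\theta_t^{non}$, where $\theta_t^{ov}=\sum_l\sum_{e}(b^l_e/p_e)\mathbb{I}_e^l$ ranges over edges produced by the overlapping sampler and $\theta_t^{non}$ over those produced by the non-overlapping sampler, I would first invoke Theorem~\ref{th:bias} on each pool separately to confirm that both pieces, and hence $\theta_t$, remain unbiased, so that minimising variance is the only remaining concern.

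The central step is the variance decomposition. I would expand
\[
Var(\theta_t)=Var(\theta_t^{ov})+Var(\theta_t^{non})+2\,Cov(\theta_t^{ov},\theta_t^{non}),
\]
and then show the cross term vanishes. Because the overlapping and non-overlapping pools are disjoint (an edge of $E_{ij}$ either shares both endpoints with a time-$(t-1)$ edge or it does not) and the two samplers are executed independently, for any edge $e_1$ drawn by the overlapping sampler and $e_2$ drawn by the non-overlapping sampler the indicators satisfy $Cov(\mathbb{I}_{e_1}^{l_1},\mathbb{I}_{e_2}^{l_2})=0$, exactly as in Eq.~\ref{eq:cov0}. This forces $Cov(\theta_t^{ov},\theta_t^{non})=0$, leaving $Var(\theta_t)=Var(\theta_t^{ov})+Var(\theta_t^{non})$.

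It then suffices to minimise each non-negative summand on its own. Each of $Var(\theta_t^{ov})$ and $Var(\theta_t^{non})$ has precisely the form analysed in Theorem~\ref{th:var}, namely $\sum_e (\sum_l b^l_e)^2/p_e - \sum_e(\sum_l b^l_e)^2$ restricted to its own pool under its own budget constraint $\sum_e p_e=\delta_o\!\cdot\! num_t$ or $\sum_e p_e=(1-\delta_o)\!\cdot\! num_t$. Applying the Cauchy--Schwarz argument of Theorem~\ref{th:var} to each pool independently, each variance attains its minimum exactly when $p_e\propto\|\sum_l b^l_e\|$ within that pool, i.e.\ when the optimal-probability rule of Theorem~\ref{th:var} holds for both samplers. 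Since the two minimisations are over disjoint sets of probability variables with no coupling, optimising both simultaneously minimises the sum, which is the claim.

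The main obstacle is the vanishing of the cross-covariance $Cov(\theta_t^{ov},\theta_t^{non})$, since this is where the argument genuinely relies on the design of GES rather than on a mechanical reuse of Theorem~\ref{th:var}. I would make it rigorous by working conditionally on the previously sampled subgraph $G_{t-1}$, so that the randomness at time $t$ stems only from the two independent draws, and by verifying that the overlapping sampler—though it reads $G_{t-1}$, $R_{t-1}$, $R_t$ to fix its candidate pool—shares no randomness with the non-overlapping sampler once the pools are determined. With this independence secured, the rest is a routine double invocation of Theorem~\ref{th:var}.
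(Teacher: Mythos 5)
Your proposal is correct and takes essentially the same route as the paper's proof: decompose $\theta_t$ into the overlapping and non-overlapping estimators, argue the two samplers are independent over disjoint edge pools so that $Var(\theta_t)=Var(\theta_t^{ov})+Var(\theta_t^{non})$, and then apply Theorem~\ref{th:var} to each sampler separately. Your write-up is merely more explicit than the paper's — spelling out the vanishing cross-covariance via Eq.~\ref{eq:cov0}, the per-pool budget constraints, and the conditioning on $G_{t-1}$ — but the underlying argument is identical.
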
 
\begin{proof}
The sampled edges are removed from the whole edge set after these samples are selected for generating the subgraph $G_{t-1}$ at $t-1$ point. As a result, the edge set at $t$ and its previous subgraph $G_{t-1}$ do not share any common edges, thus the sampling over them is independent.
Accordingly, the variance of $\theta_t$ can be formulated as:
\begin{equation}        Var(\theta_t)=Var(\theta_O+\theta_{NO})=Var(\theta_O)+Var(\theta_{NO}),
\end{equation}
where $\theta_O$ is the estimator for the overlapping sampler over $G_{t-1}$ and $\theta_{NO}$ is the estimator for the non-overlapping sampler over the edge set at $t$.
The minimum is achieved when Theorem \ref{th:var} holds for each sampler.
\end{proof}

Now we have achieved an optimal sampling strategy. Note that the term $b^{l}_{e}$ in Eq. \ref{eq:pe} is computed based on the previous node embeddings $\hat{x}^{(l-1)}_u$ and $\hat{x}^{(l-1)}_v$, which incurs high computation cost and complex sampling process. To reduce the time cost, we approximately compute the probability by:
\begin{equation}
     \hat{p}_{e} =   \hat{L}_{uv}/(\sum_{e'} \hat{L}_{u'v'}).
\end{equation}
With this approximation, $\hat{p}_e$ ignores the previous embeddings and only depends on the graph topology. 
Since the sigmoid function is used as the activation function in GGCN, we have $\Vert \hat{x}^{l}_u\Vert\leq d$ ($d$ is the dimensionality of $\hat{x}^{l}_u$).
Thus, we can set boundaries for the sampling probability in Eq. \ref{eq:bound}:
\eat{\begin{equation}
    \frac{\hat{L}_{uv}\hat{x}_{uv}}{\sum_{e'}\hat{L}_{u'v'}d}\leq p_e\leq\frac{\hat{L}_{uv}d}{\sum_{e'}\hat{L}_{u'v'}\hat{x}_{u'v'}}.
\label{eq:bound}
\end{equation}}
\begin{equation}
\hat{L}_{uv}\hat{x}_{uv}/(\sum_{e'}\hat{L}_{u'v'}d)\leq p_e\leq\hat{L}_{uv}d/(\sum_{e'}\hat{L}_{u'v'}\hat{x}_{u'v'}),
\label{eq:bound}
\end{equation}
where $\hat{x}_{uv}=\frac{1}{2L}\Vert\sum_l (\hat{x}^{l}_u+\hat{x}^{l}_v)\Vert$. Furthermore, the error between $p_e$ and $\hat{p}_e$ can be estimated as:
\begin{equation}
\label{eq:bound_1}
     \vert\hat{p}_e - p_e \vert\leq  \vert(\hat{L}_{uv}/(\sum_{e'} \hat{L}_{u'v'}))(1-\hat{x}_{uv}/d)\vert
\end{equation}
\begin{equation}
\label{eq:bound_2}
    \vert p_e -\hat{p}_e\vert \leq \vert\hat{L}_{uv}(1/(\sum_{e'}\hat{L}_{u'v'}\hat{x}_{u'v'}/d)-1/(\sum_{e'} \hat{L}_{u'v'})) \vert
\end{equation}
In Eq. \ref{eq:bound_1}, $\hat{L}_{uv}$ and $(1-{\hat{x}_{uv}}/{d})$ are smaller than $1$. When the edge number is large, ${1}/({\sum_{e'} \hat{L}_{u'v'}})$ is small. 
For Eq. \ref{eq:bound_2}, when the edge number is large, the bound of $\vert p_e -\hat{p}_e\vert$ closes to $0$. Thus, applying GES to a large graph produces small error. 

\subsection{Dynamic Item-aware Influence Prediction}\label{sec:group_inf_pred}

Estimating the group influence and applying it to recommendation help propagate the incoming media over social networks. In practice, the media propagation probabilities among groups are affected by multiple factors like the activeness of a group, the similarity between groups, the propagation and acceptance willingness of groups for items. Meanwhile, the group influence is dynamic due to media propagation. However,  existing methods \cite{DBLP:conf/kdd/KempeKT03} initialize the propagation probability by asserting random values, which ignores all the factors on groups. The user activeness \cite{DBLP:conf/kdd/ZhangZTKLX20} and user similarity \cite{DBLP:journals/eswa/KeikhaRAA20} have been considered to compute the propagation probability. However, they all ignore the willingness of groups in media propagation. Besides, these methods statically compute the global influence and ignore the dynamics of groups.
To address these issues, we propose a DYnamic Independent Cascade (DYIC) model that considers the group activeness, group similarity, propagation and acceptance willingness with respect to items. 
First, we design a Dynamic Item-aware Information PROpagation Graph (DI$^2$PROG) to capture the dynamics of group influence with an incoming item and the temporal group-item interactions.
Then, we extend the IC model \cite{DBLP:conf/kdd/KempeKT03} by embedding DI$^2$PROG to enable its dynamics.

\noindent\textbf{DI$^2$PROG.} To predict group influence, it is necessary to discover the information propagation over groups. User groups are dynamic and may change with the incoming items over time. Thus the static information propagation graph cannot reflect the real instant group influence. A better model is required to reflect the dynamic group influence. To achieve this, we design DI$^2$PROG from which the group influence over a social network is captured to support the information propagation for selecting good user groups in recommendation. Given an incoming item $v$,
we construct a DI$^2$PROG over the embeddings of all user groups $G$ in the database and $v$. The DI$^2$PROG, $\mathcal{G}^{v}= \mathcal{(V,E')}$, is a directed graph consisting of a node set $\mathcal{V}$ and an edge set $\mathcal{E'}$. Each node is a user group. An edge from group $g_i$ to group $g_j$ denotes the probability that $g_i$ propagates item $v$ to $g_j$.
We consider three factors, the activeness $A_i$ of group $g_i$, the similarity $sim(g_i,g_j)$ between groups, the propagation willingness $W^v_i$ of group $g_i$ and that of group $g_j$, $W^v_j$, to assure the propagation probability. Following \cite{lee2021information}, the activeness is measured by the degree of recent activities against the total amount of activities.
\begin{equation}
    A_i = \frac{\#Recent~ Activities}{\# Total~Activities}.
\end{equation}
Given two groups $g_i$ and $g_j$ with their embeddings $e_{gi}$ and $e_{gj}$, we measure their similarity $sim(g_i,g_j)$ by cosine similarity between $e_{gi}$ and $e_{gj}$, and the willingness $W^v_i$ by the inner product of $e_{gi}$ and item embedding $e_v$, as well as $W^v_j$:
\begin{equation}
sim(g_i,g_j) = \frac{e_{g_i} e_{g_j}}{\Vert e_{g_i}\Vert\Vert e_{g_j}\Vert}, \quad W^v_i = e_v^Te_{g_i}.
\end{equation}
We fuse these factors to get the final edge weight $p^v_{ij}$.
\begin{equation}
    P^v_{ij} = {\gamma_1 A_i W^v_{i} + \gamma_2 sim(g_i,g_j) + (1 - \gamma_1 -\gamma_2) W^v_{j}},
\label{eq:prob}
\end{equation}
where $\gamma_1$ and $\gamma_2$ are parameters to trade-off different factors.

DI$^2$PROG is automatically updated when new group-item interactions are observed. 
Since the activeness is associated with the number of interactions, we recompute the activeness of a group $g_i$ with new interactions. The willingness of group $g_i$ is estimated to reflect the probability of item $v$ interacting $g_i$. The willingness of group $g_i$ is updated by setting $W^v_i=1$ once an interaction is observed.
We do not need to update the group similarity, since it is computed by the predicted embeddings. We periodically check the new interaction set and update the edge weights to reflect the dynamics of DI$^2$PROG.

\noindent\textbf{DYIC.}
We propose a DYIC model which simulates the information propagation over DI$^2$PROG. 
A recommended group $g_i$ is considered as the active seed that activates its neighbours based on their propagation probabilities. 
The activated neighbour groups are added into the seed set to activate the next-order neighbours. This activation process is recursively conducted until no more groups can be activated. We update DI$^2$PROG when the new interactions happen.
Alg. \ref{alg:DIC} shows the detailed process of item propagation using DYIC, which includes three steps. First, we update DI$^2$PROG by adjusting the activeness and the propagation willingness of each group in $\mathcal{V}_{new}$ (lines 2-5). Then, the groups in the current influential group seed set activate their inactive neighbours whose propagation probabilities are larger than a randomly selected threshold as in the IC model \cite{DBLP:conf/kdd/KempeKT03}. The newly activated groups or the ones contained in $\mathcal{V}_{new}$ will be added to the influential group seed set for the next-round activation. The activation is recursively conducted until no groups will be activated (lines 6-14). Finally, we merge the influential seed sets and return the number of active groups (lines 15-16). Given a DI$^2$PROG $\mathcal{G}^v$ with $N'_e$ edges, the time cost of DYIC is $\mathcal{O}(N'_e)$.

\begin{figure}[ht]\centering\small
        \vspace{0ex}
	\hspace*{-0.0cm}\fbox{
		\begin{minipage}{82mm}		\textbf{input:} the DI$^2$PROG $\mathcal{G}^v$; seed group $g_i$; streaming data $\mathcal{V}_{new}$;  \\
		~~~~~~~\textbf{output:} the number of active groups  $\vert I_g\vert$\\
			1.\hspace*{.5ex}t=1; $I^{0}_g \gets \emptyset$; $I^1_g \gets g_i$\\
			2.\hspace*{.5ex}while $t>0$ do\\
			3.\hspace*{3.5ex}for each $g_j$ in $\mathcal{V}_{new}$ ~~~~// Updating $\mathcal{G}^v$ by $\mathcal{V}_{new}$\\
			4.\hspace*{7ex}update $A_i$; $W^v_j\gets{1}$\\
			5.\hspace*{7ex}for each $g_k$ in $N_{g_j}$, \quad update $P_{jk}$ and $P_{kj}$ by Eq. \ref{eq:prob}\\
			6.\hspace*{3.5ex}for each $g_j$ in $I^{t}_g /I^{t-1}_g$\\
			7.\hspace*{7ex}for each $g_k$ in $N_{g_j}$\\
			8.\hspace*{9ex}if $P_{jk}>\tau$, \quad$g_k$ is activated\\
			9.\hspace*{9ex}if $g_k$ is active or $g_k\in \mathcal{V}_{new}$, \quad $I^{t+1} \gets g_k$\\
			10.\hspace*{2.5ex}if $I^{t+1}=\emptyset$, \quad break\\
			11.\hspace*{.5ex}$I_g = I^1_g \cup\cdots\cup I^t_g$ \\
			12.\hspace*{.5ex}Return $\vert I_g\vert$
			\end {minipage}
		} 
        \vspace{0ex}
        \caption{ DYIC model.}
        \label{alg:DIC}
        \vspace{0ex}
\end{figure}

\subsection{Group Recommendation Generation}
\label{sec:group_pref_pred}
Given an incoming item, we can simply compute the relevance scores between it and all the groups in a database, and select the top-$K$ relevant groups. In streaming, many items come in a time window, which requires relevance matching between each item and all the user groups. Suppose there are $m$ items in a time slot and $N$ groups in a database, the time cost of recommendation is $m*N$. We design a two-level hash-based User Group Index schema (UG-Index) to organize the user groups for efficient recommendation.

\begin{figure}
    \centering    \includegraphics[width=0.6\textwidth]{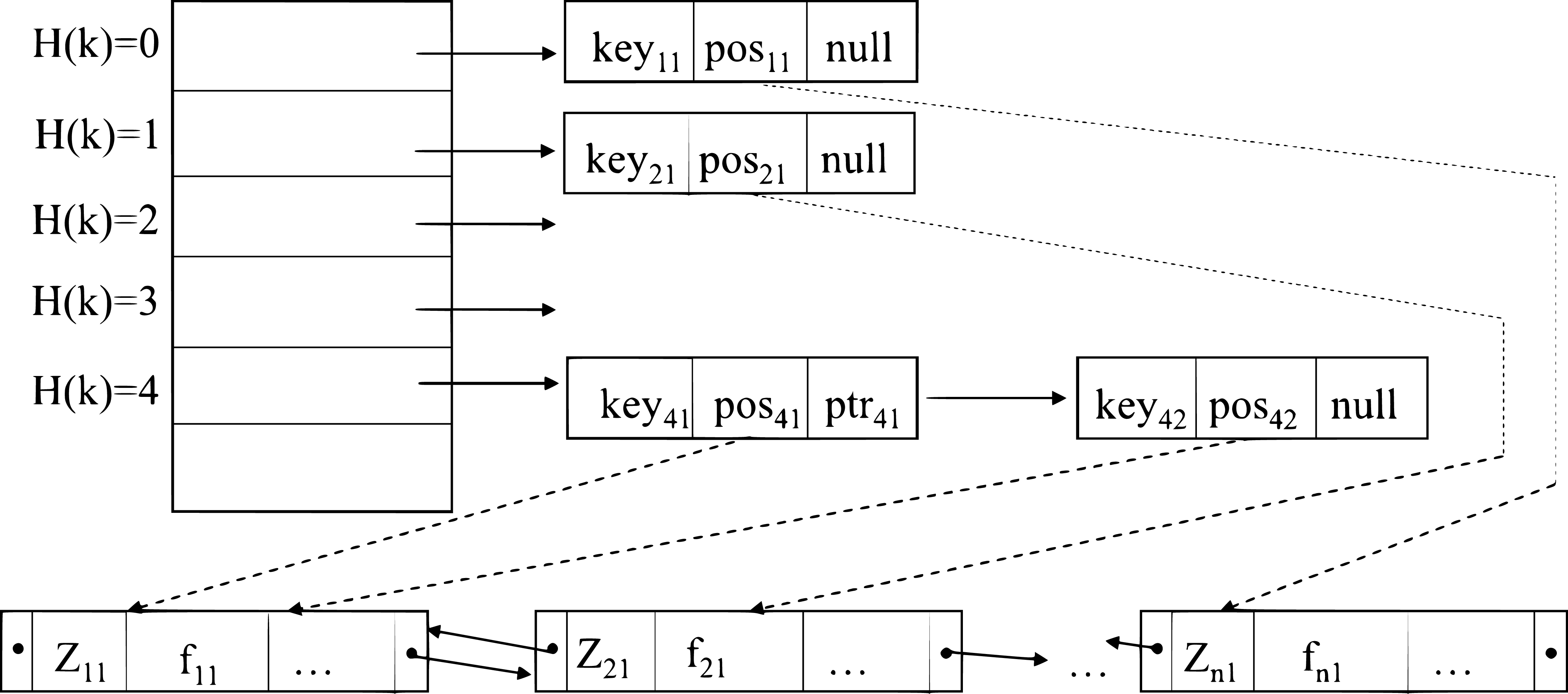}
    \caption{UG-Index structure.}
    \label{fig:UG-Index}
    \vspace{0ex}
\end{figure}

As shown in Fig. \ref{fig:UG-Index}, UG-Index consists of two parts: (1) a LSH schema that maps each user group embedding to a Z-order value; and (2) a chained hash table that maps each Z-order value to the position of the corresponding Z-order value followed by its group feature. We apply LSH over group features, each is mapped into a Z-order value as in LSB-trees \cite{DBLP:conf/sigmod/TaoYSK09}. 
Inspired by the success of tensor space embedding in \cite{DBLP:journals/vldb/ZhouQ0Z19}, we embed the dot product-based relevance into a $L_2$ space so that LSH can be applied.
We map a group/item embedding to a point in a $\Bar{k}$-dimensional $L_2$ space by FastMap \cite{DBLP:conf/sigmod/FaloutsosL95} to preserve the relevance between embeddings. Given a dimension $i \in \{1 ..\Bar{k}\}$, two reference points $x_i$ and $x_i'$ are selected. Then, an embedding  is mapped into $L_2$ space by:
\begin{equation}
F_i(a)=(\hat{r}_{x_i,a}^2+\hat{r}_{x_i,x_i'}^2-\hat{r}_{x_i',a})/(2\hat{r}_{x_i,x_i'}).
\label{eq:FastMap}
\end{equation}

Given a set of group or item embeddings, we convert each group or item embedding into a $\Bar{k}$-dimensional vector that is further mapped to a Z-order value by hashing.
We apply LSH under $\hat{r}_{g,v}$ to the group-item comparison, so the approximate $k$ nearest neighbours of an item embedding can be easily found based on its Z-order value. We store these group embeddings and their Z-order values into a list of blocks in memory, where neighbouring blocks are bidirectionally connected such that the access to them can be done in two ways. Each element of these blocks is a pair of a Z-order value and the corresponding user group feature that is a vector combining the embedding of the group interaction history, group attribute feature and group influence. On top of these linked hash blocks, we construct a chained hash table to organize the positions of Z-order values in the linked blocks due to its simplicity and flexibility on its element number. Since the universal class of hash functions are unlikely to lead to a poor behaviour for a regular set of keys, we use them for mapping Z-order values to hash codes. 
Let $k$ be a Z-order value, $a$ and $b$ be randomly chosen integers, $p$ be a large prime no smaller than the total number of user groups in database, $T$ be the number of hash buckets. The hash function is defined as:
\begin{equation}
H(k)=((a\times k+b)\mod{p})\mod{T}.
\end{equation}
Given a user group set, its Z-order values is organized into a chained hash table containing a list of hash buckets. Each element of the hash table is a triplet $<key, pos, nextptr>$, where $key$ denotes the Z-order value, $pos$ is the location of the key in the LSH-based block list, and $nextptr$ is the pointer to the next element having the same hash code in the chained hash table.

\begin{algorithm}[t]
\caption{Recommendation Generation with UG-Index}
\label{alg:recGen}
\KwIn{Pretrained EIGR; item stream $I$; number of top users $K$; UG-Index}
\KwOut{$KNNlists$ --- a set of top-$K$ relevant user group lists}

$KNNlists \leftarrow \emptyset$; $\{e_i\} \leftarrow \text{ItemEmbedding}(I)$\;

\ForEach{$e_i \in \{e_i\}$}{
    GroupPreferencePrediction($I$)\;
    GroupInfluencePrediction($I$)\;
}

$\{\bar{e_i}\} \leftarrow \text{Map2L2}(\{e_i\})$; 
$\{Z_i\} \leftarrow \text{MapItem2Zorder}(\{\bar{e_i}\})$\;

$\{C_i\} \leftarrow \text{ClusterZorders}(\{Z_i\})$ \tcp*{All $Z_j$ in a $C_i$ are equal}

\ForEach{$C_i \in \{C_i\}$}{
    $pos_i \leftarrow \text{FindZorderPosition}(C_i, \text{UG-Index})$\;
    \ForEach{$c_j \in C_i$}{
        $KNNlists \leftarrow \text{MatchHashBlockElements}(c_j, pos_i)$\;
    }
}

\Return $KNNlists$\;
\end{algorithm}

Given a set of incoming items, we perform the recommendation by searching the top $K$ matched user groups in the database for each item. Alg. \ref{alg:recGen} shows the recommendation algorithm. First, we generate the item embedding for each item (line 1). Then, for each item embedding, we predict the influence and preference of user groups in the database and map it to a Z-order value by LSH-based hash mapping (lines 2-5). All the Z-order values are clustered to keep the items with the same Z-order in the same cluster (line 6). For each group, we search the chained hash table to direct to the position of its Z-order value in the  bidirectionally linked hash blocks (lines 7-8). The top \textit{K} relevant user groups for each item in a cluster are identified by computing their relevance to the candidate group at the identified Z-order position. We recursively conduct this calculation over the neighbouring candidate groups in a bidirectional order until the total number of entries accessed from all LSH blocks has reached 4B/d, where $B$ is the block size and $d$ is the dimensionality of the item or group embeddings as in \cite{DBLP:conf/sigmod/TaoYSK09} (line 10). The final recommendation for each item is returned (line 11). Compared with LSB-trees \cite{DBLP:conf/sigmod/TaoYSK09}, UG-Index does not need the tree search to find the relevant groups, which is better for stream processing.

\section{Experimental Evaluation}
\label{sec:exp}

\subsection{Experiment Setup}

We conduct experiments on three real-world datasets: Yelp\footnote{\url{https://www.yelp.com/}}, MovieLens 1M\footnote{\url{https://grouplens.org/datasets/movielens/1m/}} (ML1M), and Mafengwo\footnote{\url{https://www.mafengwo.cn/}} (MFW).
User groups are constructed following the approach in \cite{DBLP:conf/icdm/HeZ0C0T24}.
The statistics of the datasets are summarised in Table~\ref{tab:dataset}.

\begin{table}[htbp]\vspace{-0ex}
    \caption{\small Statistics of the datasets.}\vspace{-0ex}
    \centering
    \begin{tabular}{|c|c|c|c|c|c|}
    \hline
         Dataset & \# Users & \# Items & \# Groups & Avg. \# Items/Group & Avg. \# Items/User \\
    \hline
       Yelp  & 34,504 & 22,611 & 24,103 & 1.12 & 13.98 \\
    \hline
       ML1M  & 6,040 & 3,883 & 1,350 & 7.51 & 138.49 \\
    \hline
       MFW  & 5,275 & 1,513 & 995 & 3.61 & 7.53 \\
    \hline
    \end{tabular}
    \label{tab:dataset}\vspace{-0ex}
\end{table}

Following \cite{DBLP:conf/icdm/HeZ0C0T24}, all group-item interactions are chronologically ranked, with a specified time point used to split the data into the earliest 80\% as training set $D_{train}$ and the most recent 20\% as test set $D_{test}$.
The training set $D_{train}$ is further divided into $T$ subsets ${D^i_{train}}$ based on $T$ time points to train TGGCN-RA, where $i$ denotes each time interval.
The proposed EIGR model is evaluated against several state-of-the-art baselines, including LightGCN \cite{DBLP:conf/sigir/0001DWLZ020}, GroupIM \cite{DBLP:conf/sigir/SankarWWZYS20}, ConsRec \cite{DBLP:conf/www/WuX0J0ZY23}, and IGR \cite{DBLP:conf/icdm/HeZ0C0T24}.
LightGCN performs light-weight graph convolution to learn user and item embeddings. GroupIM models preference covariance among group members and assigns weights to each member to derive user and group embeddings. ConsRec leverages multi-view embeddings and a hypergraph neural network to generate group representations.
 
The effectiveness of EIGR is evaluated using standard metrics, including Hit Ratio (HR), Normalised Discounted Cumulative Gain (NDCG) \cite{DBLP:conf/sigir/Cao0MAYH18,DBLP:conf/sigir/Wang0WFC19}, and group influence \cite{DBLP:conf/icbk/QiaoM0Z21}, which is measured by the normalised number of active groups ($\sigma_{inf}$) following each recommendation.
In line with \cite{DBLP:conf/www/HeLZNHC17,DBLP:conf/sigir/WuSFHWW19}, 100 groups are randomly selected for parameter tuning, with each group treated as an individual unit in the recommendation process.
The efficiency of EIGR is assessed based on system response time. The implementation is carried out using PyTorch, and all experiments are conducted on a machine equipped with an Intel i5 2.30GHz processor, 16 GB RAM, and a 4 GB NVIDIA GTX 1050Ti GPU.
The source code and datasets are publicly available\footnote{https://github.com/MattExpCode/IGR}.

\subsection{Parameter Setting}
We evaluate the effect of parameters to get optimal results.

\begin{figure}[htb]
\centering
    \subfigure[Effect of $lr_1$]{
    \label{fig:effect_lr_1}
    \includegraphics[width=0.49\textwidth]{ 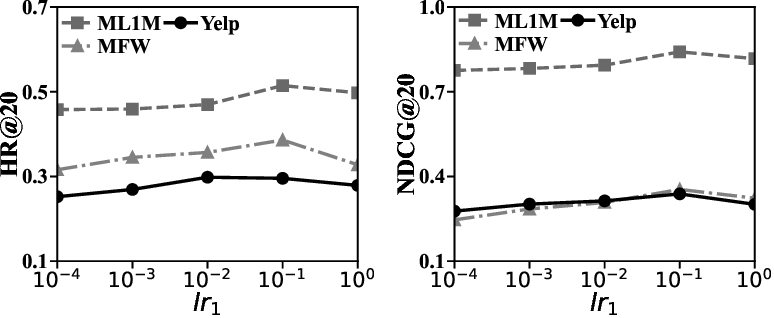}
    }
    \subfigure[Effect of $lr_2$]{
    \label{fig:effect_lr_2}
    \includegraphics[width=0.49\textwidth]{ 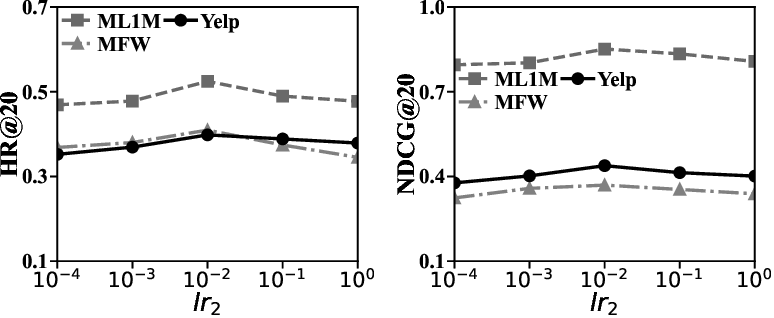}
    }
    \vspace{-2ex}
    \caption{Effect of $lr_i$.}
\end{figure}

\noindent\textbf{Effect of $lr_i$.}
We test the optimal learning rates $lr_i$ by varying it from $10^{-4}$ to $10^{0}$. As reported in Fig. \ref{fig:effect_lr_1} -\ref{fig:effect_lr_2}, as the increase of $lr_i$ values, the HR and NDCG values increase, reach their peaks at $lr_1 = 10^{-1}$ and $lr_2 = 10^{-2}$, and drop with the further increase of $lr_i$. Thus, we set default $lr_1$ to $10^{-1}$ and $lr_2$ to $10^{-2}$.

\begin{figure}[h]
\centering
    \subfigure[Effect of $\lambda_1$]{
    \label{fig:effect_lambda_1}
    \includegraphics[width=0.49\textwidth]{ 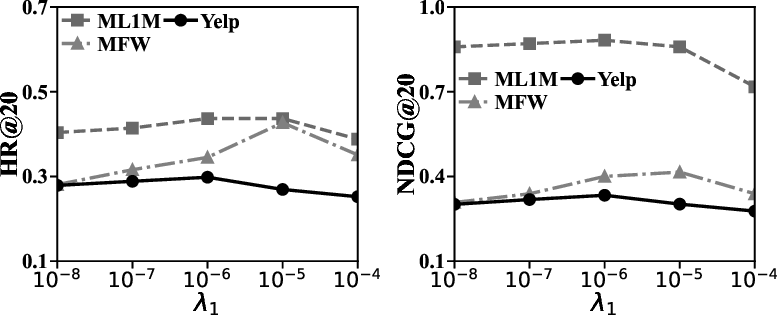}
    }
    \subfigure[Effect of $\lambda_2$]{
    \label{fig:effect_lambda_2}
    \includegraphics[width=0.49\textwidth]{ 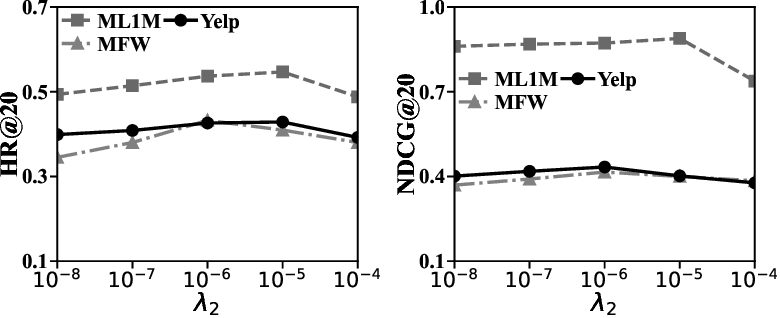}
    }
    \vspace{-2ex}
    \caption{Effect of $\lambda_i$.}
\end{figure}

\noindent\textbf{Effect of $\lambda_i$.}
We evaluate the effect of regularization parameters $\lambda_1$ and $\lambda_2$ by searching them from $10^{-8}$ to $10^{-4}$. As shown in Fig. \ref{fig:effect_lambda_1} -\ref{fig:effect_lambda_2}, the HR and NDCG values increase with $\lambda_i$ increasing until they achieve optimal $\lambda_1$ and $\lambda_2$, which are $10^{-6}$ and $10^{-5}$ for Yelp and ML1M, and  $ 10^{-5}$ and $10^{-6}$ for MFW. After that, the performance drops. Thus, we set $\lambda_1 = 10^{-6}$ and $\lambda_2 = 10^{-5}$ for Yelp and ML1M, and $\lambda_1 = 10^{-5}$ and $\lambda_2 = 10^{-6}$ for MFW.

\begin{figure}[h]
    \centering
    \includegraphics[width=0.49\textwidth]{ 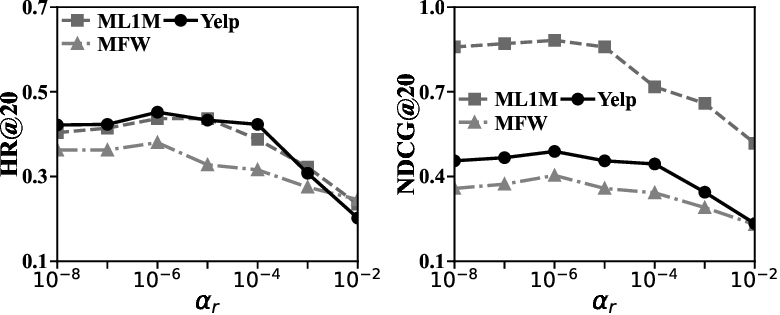}
    \vspace{-2ex}
    \caption{Effect of $\alpha_r$.}
    \label{fig:effect_alpha_r}
\end{figure}

\noindent\textbf{Effect of $\alpha_r$.}
We test the effect of the trade-off parameter $\alpha_r$ over all datasets. Fig. \ref{fig:effect_alpha_r} shows the HR and NDCG values at each $\alpha_r$. 
Clearly, the effectiveness of EIGR increases first with $\alpha_r$ decreasing, achieves the best performance at $10^{-6}$, and decreases smoothly after that. Thus, we set the default $\alpha_r$ value to $10^{-6}$. 

\begin{figure}[h]
    \centering
    \includegraphics[width=0.48\linewidth]{ 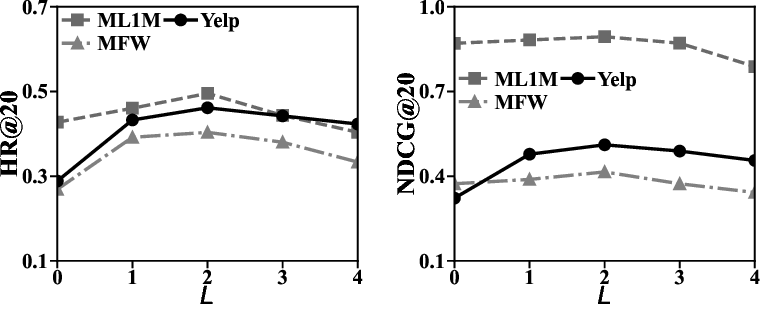}
    \vspace{-2ex}
    \caption{Effect of $L$.}
    \label{fig:effect_L}
\end{figure}

\noindent\textbf{Effect of $L$.}
We test the effect of the GCN layer number $L$ on recommendation quality by setting $L$ from 0 to 4. As shown in Fig. \ref{fig:effect_L}, the effectiveness increases with the varying of $L$ from $0$ to $2$, and degrades with the further increase of $L$ from 2 to 4. Thus, we set the default value of $L$ to $2$ for three datasets.

\begin{figure}[h]
    \centering
    \includegraphics[width=0.49\textwidth]{ 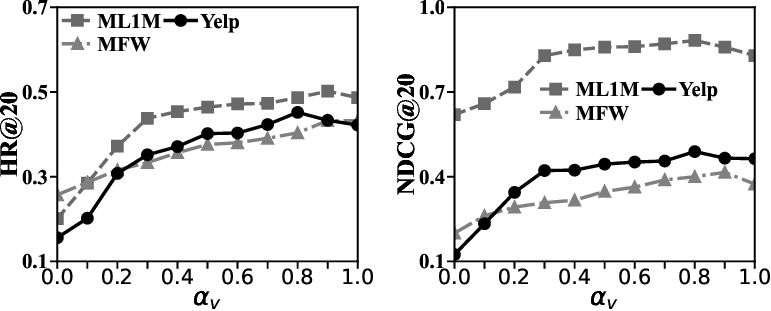}
    \vspace{-2ex}
    \caption{Effect of $\alpha_v$.}
    \label{fig:effect_alpha_v}
\end{figure}

\noindent\textbf{Effect of $\alpha_v$.}
We test the effectiveness of EIGR by varying $\alpha_v$ from 0 to 1. Fig. \ref{fig:effect_alpha_v} shows that HR curves increase with the increase of $\alpha_v$ and reach the optimal values, at $0.8$ for Yelp and $0.9$ for ML1M and MFW. 
Thus, we set the default $\alpha_v$ to $0.8$ for Yelp and $0.9$ for ML1M and MFW.

\begin{figure}[htb]
\centering
    \subfigure[Effect of $\gamma_1$]{
    \label{fig:effect_gamma_1}
    \includegraphics[width=0.49\textwidth]{ 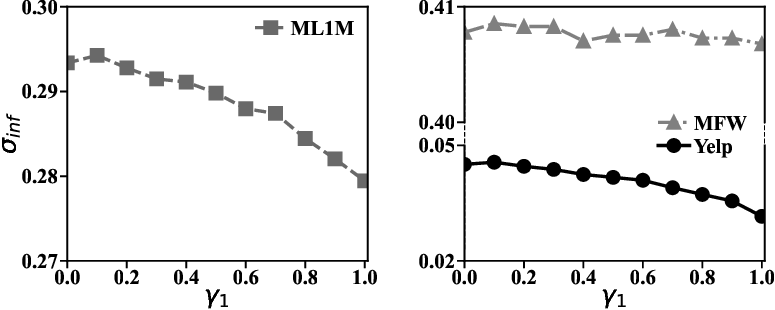}
    }
    \subfigure[Effect of $\gamma_2$]{
    \label{fig:effect_gamma_2}
    \includegraphics[width=0.49\textwidth]{ 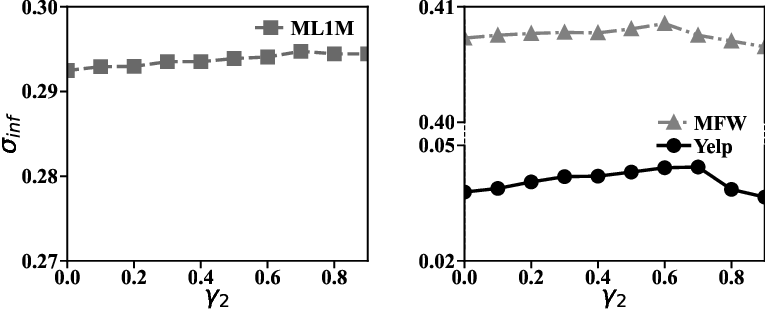}
    }
    \vspace{-2ex}
    \caption{Effect of $\gamma_i$.}
\end{figure}

\noindent\textbf{Effect of $\gamma_i$.}
We evaluate the effect of $\gamma_1$ and $\gamma_2$ on the information propagation over three datasets. We first vary $\gamma_1$ from 0 to 1 and report the best $\sigma_{inf}$ value at each $\gamma_1$ for all $\gamma_2$.
As shown in Fig. \ref{fig:effect_gamma_1}, the performance increases with $\gamma_1$ increasing from 0 to 0.1, and then drops dramatically after $\gamma_1 = 0.1$.
Thus, we set the default $\gamma_1$ to $0.1$. We test the effect of $\gamma_2$ by fixing $\gamma_1$ to $0.1$ and varying $\gamma_2$ from 0 to 0.9, and reporting the HR and NDCG of recommendation results in Fig. \ref{fig:effect_gamma_2}. The $\sigma_{inf}$ increases first as the increase of $\gamma_2$, reaches the best performance at $\gamma_2 = 0.6 $ for MFW and $0.7$ for Yelp and ML1M, and degrades after the peaks. Thus, we set the default $\gamma_2 $ to $0.6$ for MFW. and $ 0.7$ for other two datasets.

\subsection{Effectiveness Evaluation}
\begin{figure}[htb]\vspace{0ex}
\centering
    \vspace{-2ex}
    \subfigure[Yelp.]{
    \label{fig:effect_cmp_yelp}
    \includegraphics[width=10.5 cm]{ 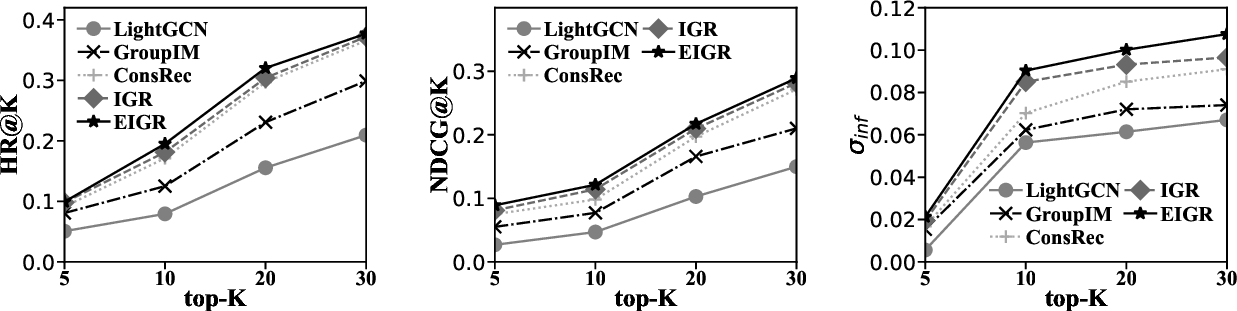}
    }
    \vspace{-2ex}
    \subfigure[ML1M.]{
    \label{fig:effect_cmp_mov}
    \includegraphics[width=10.5 cm]{ 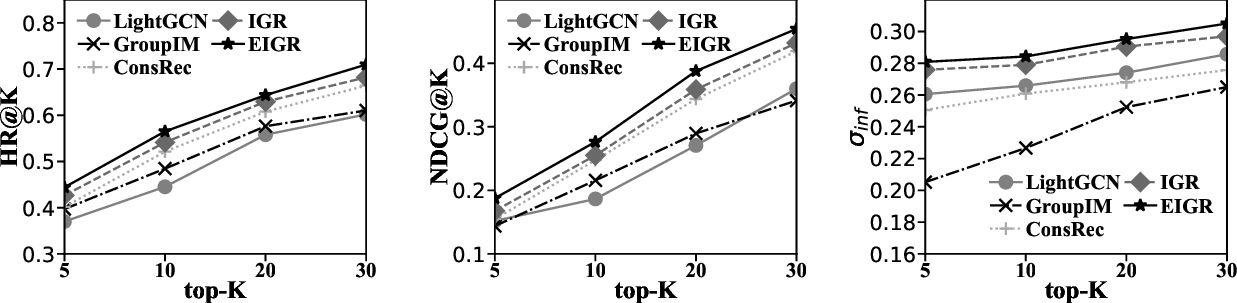}
    }
    \vspace{-2ex}
     \subfigure[MFW.]{
    \label{fig:effect_cmp_mfw}
    \includegraphics[width=10.5 cm]{ 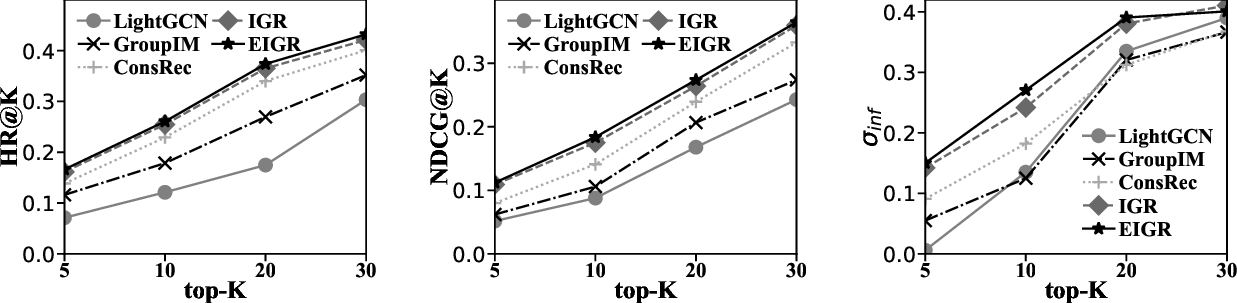}
    }
    \caption{ Effectiveness comparison.}\vspace{-0ex}
    \label{fig:effect_cmp}
\end{figure}
\noindent\textbf{Effectiveness Comparison.}
We compare EIGR with state-of-the-art baselines in terms of HR, NDCG, and $\sigma_{inf}$.
Figure~\ref{fig:effect_cmp} presents the results across the three datasets. EIGR consistently achieves the best performance, with IGR ranking second in both HR and NDCG. These results highlight the importance of modelling the dynamics of group influence for enhancing recommendation effectiveness.
For the influence measure $\sigma_{inf}$, EIGR also surpasses IGR on both Yelp and ML1M, and the two models perform comparably on MFW. These results demonstrate that dynamically modelling group influence also improves the quality of influence propagation.
Compared with other state-of-the-art methods, EIGR achieves the best performance across all datasets in terms of HR and NDCG.
This is because EIGR accounts for the dynamics of groups, items, and group influence, enabling more effective embedding learning.
ConsRec outperforms GroupIM, as it incorporates item information into group representations, thereby capturing group preferences more accurately. 
GroupIM performs worse than EIGR, IGR, and ConsRec because it relies solely on attention-based aggregation and overlooks the temporal evolution of group interests.
LightGCN yields the weakest performance across all datasets. This is likely due to its reliance on a static group-item graph structure, which becomes ineffective when group-item interactions are sparse, leading to suboptimal embedding quality.
In terms of group influence, as measured by $\sigma_{inf}$, EIGR consistently outperforms all baselines. This improvement stems from the DYIC component, which explicitly models dynamic group influence and enhances the diversity and effectiveness of influence propagation.

\subsection{Efficiency Evaluation}
\noindent\textbf{Effect of Sampling.}
We evaluate the effect of the sampling strategy on the convergence time of model training. As shown in Fig. \ref{fig:effect_sampling}, EIGR is much faster than EIGR-w/o-GES and ConsRec, and better than GroupIM. This is because EIGR reduces the redundancy across multiple graphs and generates fewer training data quadruples. Additionally, the sampling algorithm reduces the relationship between two groups from different subgraphs, leading to less training data. Thus, the designed sampling algorithm significantly improves the training efficiency while causing a slight effectiveness decline.

\begin{table}[htbp]
\centering
\caption{\small Effect of UG-Index on running time (in seconds).}
\label{tab:time_cmp}
\begin{tabular}{lccc}
\toprule
\textbf{Dataset} & \textbf{Batch-based} & \textbf{Batch+LSB-tree} & \textbf{Batch+UG-Index} \\
\midrule
Yelp   & 140.67 & 2.49 & 1.94 \\
ML1M   & 19.27  & 1.77 & 1.24 \\
MFW    & 13.60  & 1.24 & 1.01 \\
\bottomrule
\end{tabular}
\end{table}

\noindent\textbf{Effect of UG-Index.}
Given 1,000 incoming items, we test the response time of recommendation by comparing UG-Index with LSB-tree  \cite{DBLP:conf/sigmod/TaoYSK09} and the batch-based relevance matching. As shown in Table \ref{tab:time_cmp}, UG-Index performs better than LSB-tree as it exploits a two-level hashing, which identifies the positions of potential relevant groups by two hash mapping with constant time complexity. However, LSB-tree is a LSH-based B+-tree that finds the location of potential matches by one hash mapping and the Z-order value search over B+-tree. The time complexity of LSB-based search is O(log n), where $n$ is the total number of elements in the B+-tree. The batch-based approach is much slower than the other two since it browses all the groups in each batch, incurring high time cost.

\noindent\textbf{Efficiency Comparison.}
We compare the time efficiency of EIGR with state-of-the-art models across all datasets. As shown in Figure~\ref{fig:effic_cmp}, EIGR is significantly faster than both ConsRec, GroupIM, and IGR. This efficiency stems from EIGR's strategy of selecting only a small number of candidate groups for each item, thereby reducing computational overhead.

Moreover, as the dataset size increases, EIGR's time cost grows only marginally, owing to the UG-Index structure, which effectively restricts the number of candidate groups. These results demonstrate that EIGR is well-suited for efficient group recommendation on large-scale datasets.

\begin{figure}[htb]\vspace{-0ex}
\centering
    \subfigure[Effect of Sampling.]{
    \label{fig:effect_sampling}
    \includegraphics[width=0.315\linewidth]{ 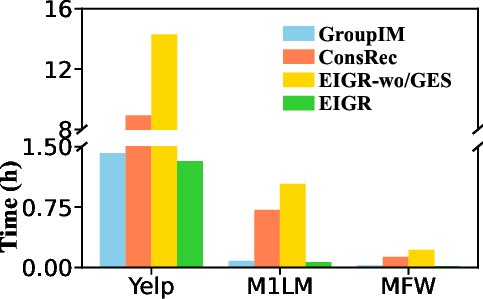}
    }
    \subfigure[Efficiency comparison.]{
    \label{fig:effic_cmp}
    \includegraphics[width=0.3\linewidth]{ 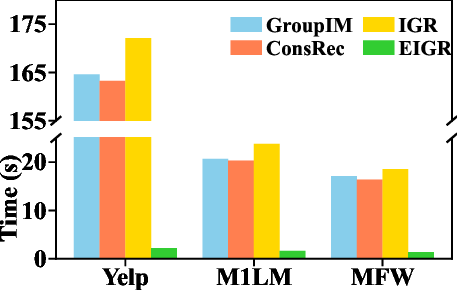}
    }
    \vspace{-0ex}
    \caption{Efficiency evaluation.}
    \label{fig:efficiency_cmp}
    \vspace{-0ex}
\end{figure}

\begin{table}[htbp]
\centering
\caption{Ablation study of EIGR on three datasets (Format: HR@20 / NDCG@20 / $\sigma_{inf}$).}
\label{tab:ablation_compact}
\renewcommand{\arraystretch}{1.1}
\setlength{\tabcolsep}{3pt}
\begin{tabular}{lccc}
\toprule
\textbf{Settings} & \textbf{Yelp} & \textbf{ML1M} & \textbf{MFW} \\
\midrule
EIGR (full)         
& 0.3221 / 0.2182 / \underline{0.1002} 
& {0.6436} / {0.3878} / {0.2953} 
& {0.3737} / {0.2736} / {0.3906} \\
w/o GES             
& \textbf{0.3244} / \textbf{0.2195} / \textbf{0.1012} 
& \textbf{0.6490} / \textbf{0.3913} / \textbf{0.2981} 
& \textbf{0.3751} / \textbf{0.2747} / \underline{0.3913} \\
w/o DYIC            
& 0.3026 / 0.2068 / 0.0940 
& 0.6251 / 0.3766 / 0.2898 
& 0.3638 / 0.2677 / 0.3783 \\
w/o UG-Index        
& \underline{0.3236} / \underline{0.2190} / 0.1009 
& \underline{0.6488} / \underline{0.3902} / \underline{0.2962} 
& \underline{0.3748} / \underline{0.2749} / \textbf{0.3919} \\
\bottomrule
\end{tabular}
\end{table}

\subsection{Ablation study}
In this subsection, we conduct the ablation study to verify the effectiveness of our key design.

\subsubsection{Ablation Analysis of EIGR}
We conduct an ablation study to assess the contributions of the three core components in EIGR: GES, DYIC, and UG-Index. The results on three datasets are reported in Table~\ref{tab:ablation_compact}.

On the Yelp dataset, removing GES yields relative increases of 0.70\% in HR@20, 0.59\% in NDCG@20, and 1.00\% in $\sigma_{inf}$. Likewise, removing UG-Index leads to increases of 0.45\%, 0.37\%, and 0.70\% respectively. These results suggest that GES and UG-Index have limited impact on recommendation effectiveness, while significantly improving model efficiency.
In contrast, removing DYIC results in consistent performance degradation. On Yelp, HR@20 drops by 6.04\% and NDCG@20 by 5.24\%, alongside a 6.18\% reduction in $\sigma_{inf}$, showing that DYIC plays a crucial role in capturing dynamic influence and maintaining both recommendation quality and influence.
Similar trends are observed on ML1M and MFW: DYIC proves essential for effectiveness, while GES and UG-Index primarily enhance efficiency with minimal impact on recommendation accuracy.

\subsubsection{Ablation Analysis of DYIC}

We conduct an ablation study to evaluate the impact of group similarity and willingness under different experimental settings. The results are presented in Table~\ref{tab:Ablation}.

\begin{table}[htb]\vspace{-0ex}
    \caption{\small Ablation analysis of DYIC model ($\sigma_{inf}$).}
    \vspace{-0ex}
    \centering
    \begin{tabular}{|c|c|c|c|}
    \hline
         Settings&  w/o Similarity & w/o Willingness & \hspace*{0.3ex} w/ All factors \hspace*{0.3ex}\\
    \hline     
     Yelp &0.0870  & 0.0901 & 0.1078 \\
   \hline
     ML1M & 0.2880 & 0.2857 & 0.3050 \\
     \hline
     MFW & 0.3979 &  0.4013 & 0.4008 \\
   \hline
    \end{tabular}
    \label{tab:Ablation}    
    \vspace{-0ex}
\end{table}

We compare the influence measure $\sigma_{inf}$ of the DYIC model under two settings: with all factors (w/ All factors) and without group similarity (w/o Similarity). The results show that incorporating all factors yields better performance. This suggests that group similarity plays an important role in enhancing the quality of influence propagation.

\noindent\textbf{Willingness.}
We compare the results generated by DYIC with all factors (w/All factors) and that without Willingness (w/o Willingness). 
Clearly, DYIC with willingness performs better, since it captures the preference of user groups with respect to items, leading to a high propagation quality. 

\section{Conclusion}
\label{sec:conclusion}
This paper studies the problem of influence-aware group recommendation over social media stream. First, we propose a novel graph sampling method to train IGR \cite{DBLP:conf/icdm/HeZ0C0T24} efficiently. Then, we propose a dynamic item-aware information propagation graph (DI$^2$PROG) to capture the group influence dynamics and a DYIC model to support the media propagation over social network dynamically. Finally, we design a UG-index to generate recommendations quickly.
The extensive experiments have proven the high efficacy of EIGR.

\begin{acks}
This work is supported by ARC Discovery Project (DP240100356), Sichuan Science and Technology Program (2025HJRC0021), and National Foreign Expert Project of China (H20240938).
\end{acks}

\bibliographystyle{ACM-Reference-Format}
\bibliography{Main}


\begin{thebibliography}{39}


\ifx \showCODEN    \undefined \def \showCODEN     #1{\unskip}     \fi
\ifx \showISBNx    \undefined \def \showISBNx     #1{\unskip}     \fi
\ifx \showISBNxiii \undefined \def \showISBNxiii  #1{\unskip}     \fi
\ifx \showISSN     \undefined \def \showISSN      #1{\unskip}     \fi
\ifx \showLCCN     \undefined \def \showLCCN      #1{\unskip}     \fi
\ifx \shownote     \undefined \def \shownote      #1{#1}          \fi
\ifx \showarticletitle \undefined \def \showarticletitle #1{#1}   \fi
\ifx \showURL      \undefined \def \showURL       {\relax}        \fi
\providecommand\bibfield[2]{#2}
\providecommand\bibinfo[2]{#2}
\providecommand\natexlab[1]{#1}
\providecommand\showeprint[2][]{arXiv:#2}

\bibitem[Amer{-}Yahia et~al\mbox{.}(2009)]%
        {DBLP:journals/pvldb/Amer-YahiaRCDY09}
\bibfield{author}{\bibinfo{person}{Sihem Amer{-}Yahia},
  \bibinfo{person}{Senjuti~Basu Roy}, \bibinfo{person}{Ashish Chawla},
  \bibinfo{person}{Gautam Das}, {and} \bibinfo{person}{Cong Yu}.}
  \bibinfo{year}{2009}\natexlab{}.
\newblock \showarticletitle{Group Recommendation: Semantics and Efficiency}.
\newblock \bibinfo{journal}{\emph{Proc. {VLDB} Endow.}} \bibinfo{volume}{2},
  \bibinfo{number}{1} (\bibinfo{year}{2009}), \bibinfo{pages}{754--765}.
\newblock


\bibitem[Baltrunas et~al\mbox{.}(2010)]%
        {DBLP:conf/recsys/BaltrunasMR10}
\bibfield{author}{\bibinfo{person}{Linas Baltrunas}, \bibinfo{person}{Tadas
  Makcinskas}, {and} \bibinfo{person}{Francesco Ricci}.}
  \bibinfo{year}{2010}\natexlab{}.
\newblock \showarticletitle{Group recommendations with rank aggregation and
  collaborative filtering}. In \bibinfo{booktitle}{\emph{RecSys}}.
  \bibinfo{pages}{119--126}.
\newblock


\bibitem[Boratto and Carta(2011)]%
        {DBLP:series/sci/BorattoC11}
\bibfield{author}{\bibinfo{person}{Ludovico Boratto} {and}
  \bibinfo{person}{Salvatore Carta}.} \bibinfo{year}{2011}\natexlab{}.
\newblock \showarticletitle{State-of-the-Art in Group Recommendation and New
  Approaches for Automatic Identification of Groups}.
\newblock In \bibinfo{booktitle}{\emph{Information Retrieval and Mining in
  Distributed Environments}}. Vol.~\bibinfo{volume}{324}.
  \bibinfo{pages}{1--20}.
\newblock


\bibitem[Cao et~al\mbox{.}(2018)]%
        {DBLP:conf/sigir/Cao0MAYH18}
\bibfield{author}{\bibinfo{person}{Da Cao}, \bibinfo{person}{Xiangnan He},
  \bibinfo{person}{Lianhai Miao}, \bibinfo{person}{Yahui An},
  \bibinfo{person}{Chao Yang}, {and} \bibinfo{person}{Richang Hong}.}
  \bibinfo{year}{2018}\natexlab{}.
\newblock \showarticletitle{Attentive Group Recommendation}. In
  \bibinfo{booktitle}{\emph{SIGIR}}. \bibinfo{pages}{645--654}.
\newblock


\bibitem[Chen et~al\mbox{.}(2018a)]%
        {chen2018fastgcn}
\bibfield{author}{\bibinfo{person}{Jie Chen}, \bibinfo{person}{Tengfei Ma},
  {and} \bibinfo{person}{Cao Xiao}.} \bibinfo{year}{2018}\natexlab{a}.
\newblock \showarticletitle{FastGCN: Fast Learning with Graph Convolutional
  Networks via Importance Sampling}. In \bibinfo{booktitle}{\emph{{ICLR}}}.
\newblock


\bibitem[Chen et~al\mbox{.}(2018b)]%
        {chen2017stochastic}
\bibfield{author}{\bibinfo{person}{Jianfei Chen}, \bibinfo{person}{Jun Zhu},
  {and} \bibinfo{person}{Le Song}.} \bibinfo{year}{2018}\natexlab{b}.
\newblock \showarticletitle{Stochastic Training of Graph Convolutional Networks
  with Variance Reduction}. In \bibinfo{booktitle}{\emph{{ICML}}},
  Vol.~\bibinfo{volume}{80}. \bibinfo{pages}{941--949}.
\newblock


\bibitem[Chiang et~al\mbox{.}(2019)]%
        {chiang2019cluster}
\bibfield{author}{\bibinfo{person}{Wei-Lin Chiang}, \bibinfo{person}{Xuanqing
  Liu}, \bibinfo{person}{Si Si}, \bibinfo{person}{Yang Li},
  \bibinfo{person}{Samy Bengio}, {and} \bibinfo{person}{Cho-Jui Hsieh}.}
  \bibinfo{year}{2019}\natexlab{}.
\newblock \showarticletitle{Cluster-gcn: An efficient algorithm for training
  deep and large graph convolutional networks}. In
  \bibinfo{booktitle}{\emph{SIGKDD}}. \bibinfo{pages}{257--266}.
\newblock


\bibitem[Faloutsos and Lin(1995)]%
        {DBLP:conf/sigmod/FaloutsosL95}
\bibfield{author}{\bibinfo{person}{Christos Faloutsos} {and}
  \bibinfo{person}{King{-}Ip Lin}.} \bibinfo{year}{1995}\natexlab{}.
\newblock \showarticletitle{FastMap: {A} Fast Algorithm for Indexing,
  Data-Mining and Visualization of Traditional and Multimedia Datasets}. In
  \bibinfo{booktitle}{\emph{SIGMOD}}. \bibinfo{pages}{163--174}.
\newblock


\bibitem[Guo et~al\mbox{.}(2022)]%
        {DBLP:journals/tois/GuoYCZZ22}
\bibfield{author}{\bibinfo{person}{Lei Guo}, \bibinfo{person}{Hongzhi Yin},
  \bibinfo{person}{Tong Chen}, \bibinfo{person}{Xiangliang Zhang}, {and}
  \bibinfo{person}{Kai Zheng}.} \bibinfo{year}{2022}\natexlab{}.
\newblock \showarticletitle{Hierarchical Hyperedge Embedding-Based
  Representation Learning for Group Recommendation}.
\newblock \bibinfo{journal}{\emph{{ACM} Trans. Inf. Syst.}}
  \bibinfo{volume}{40}, \bibinfo{number}{1} (\bibinfo{year}{2022}),
  \bibinfo{pages}{3:1--3:27}.
\newblock


\bibitem[Guo et~al\mbox{.}(2020)]%
        {DBLP:conf/icde/GuoYW0HC20}
\bibfield{author}{\bibinfo{person}{Lei Guo}, \bibinfo{person}{Hongzhi Yin},
  \bibinfo{person}{Qinyong Wang}, \bibinfo{person}{Bin Cui},
  \bibinfo{person}{Zi Huang}, {and} \bibinfo{person}{Lizhen Cui}.}
  \bibinfo{year}{2020}\natexlab{}.
\newblock \showarticletitle{Group Recommendation with Latent Voting Mechanism}.
  In \bibinfo{booktitle}{\emph{ICDE}}. \bibinfo{pages}{121--132}.
\newblock


\bibitem[He et~al\mbox{.}(2024)]%
        {DBLP:conf/icdm/HeZ0C0T24}
\bibfield{author}{\bibinfo{person}{Chengkun He}, \bibinfo{person}{Xiangmin
  Zhou}, \bibinfo{person}{Chen Wang}, \bibinfo{person}{Longbing Cao},
  \bibinfo{person}{Jie Shao}, {and} \bibinfo{person}{Zahir Tari}.}
  \bibinfo{year}{2024}\natexlab{}.
\newblock \showarticletitle{Influence-Aware Group Recommendation for Social
  Media Propagation}. In \bibinfo{booktitle}{\emph{{IEEE} International
  Conference on Data Mining, {ICDM} 2024, Abu Dhabi, United Arab Emirates,
  December 9-12, 2024}}. \bibinfo{publisher}{{IEEE}},
  \bibinfo{pages}{705--710}.
\newblock


\bibitem[He et~al\mbox{.}(2020)]%
        {DBLP:conf/sigir/0001DWLZ020}
\bibfield{author}{\bibinfo{person}{Xiangnan He}, \bibinfo{person}{Kuan Deng},
  \bibinfo{person}{Xiang Wang}, \bibinfo{person}{Yan Li},
  \bibinfo{person}{Yong{-}Dong Zhang}, {and} \bibinfo{person}{Meng Wang}.}
  \bibinfo{year}{2020}\natexlab{}.
\newblock \showarticletitle{LightGCN: Simplifying and Powering Graph
  Convolution Network for Recommendation}. In
  \bibinfo{booktitle}{\emph{SIGIR}}. \bibinfo{pages}{639--648}.
\newblock


\bibitem[He et~al\mbox{.}(2017)]%
        {DBLP:conf/www/HeLZNHC17}
\bibfield{author}{\bibinfo{person}{Xiangnan He}, \bibinfo{person}{Lizi Liao},
  \bibinfo{person}{Hanwang Zhang}, \bibinfo{person}{Liqiang Nie},
  \bibinfo{person}{Xia Hu}, {and} \bibinfo{person}{Tat{-}Seng Chua}.}
  \bibinfo{year}{2017}\natexlab{}.
\newblock \showarticletitle{Neural Collaborative Filtering}. In
  \bibinfo{booktitle}{\emph{WWW}}. \bibinfo{pages}{173--182}.
\newblock


\bibitem[Huang et~al\mbox{.}(2015)]%
        {DBLP:conf/sigmod/HuangCZJX15}
\bibfield{author}{\bibinfo{person}{Yanxiang Huang}, \bibinfo{person}{Bin Cui},
  \bibinfo{person}{Wenyu Zhang}, \bibinfo{person}{Jie Jiang}, {and}
  \bibinfo{person}{Ying Xu}.} \bibinfo{year}{2015}\natexlab{}.
\newblock \showarticletitle{TencentRec: Real-time Stream Recommendation in
  Practice}. In \bibinfo{booktitle}{\emph{SIGMOD}}. \bibinfo{pages}{227--238}.
\newblock


\bibitem[Keikha et~al\mbox{.}(2020)]%
        {DBLP:journals/eswa/KeikhaRAA20}
\bibfield{author}{\bibinfo{person}{Mohammad~Mehdi Keikha},
  \bibinfo{person}{Maseud Rahgozar}, \bibinfo{person}{Masoud Asadpour}, {and}
  \bibinfo{person}{Mohammad~Faghih Abdollahi}.}
  \bibinfo{year}{2020}\natexlab{}.
\newblock \showarticletitle{Influence maximization across heterogeneous
  interconnected networks based on deep learning}.
\newblock \bibinfo{journal}{\emph{Expert Syst. Appl.}}  \bibinfo{volume}{140}
  (\bibinfo{year}{2020}).
\newblock


\bibitem[Kempe et~al\mbox{.}(2003)]%
        {DBLP:conf/kdd/KempeKT03}
\bibfield{author}{\bibinfo{person}{David Kempe}, \bibinfo{person}{Jon~M.
  Kleinberg}, {and} \bibinfo{person}{{\'{E}}va Tardos}.}
  \bibinfo{year}{2003}\natexlab{}.
\newblock \showarticletitle{Maximizing the spread of influence through a social
  network}. In \bibinfo{booktitle}{\emph{SIGKDD}}. \bibinfo{pages}{137--146}.
\newblock


\bibitem[Lee and Oh(2021)]%
        {lee2021information}
\bibfield{author}{\bibinfo{person}{Minsoo Lee} {and} \bibinfo{person}{Soyeon
  Oh}.} \bibinfo{year}{2021}\natexlab{}.
\newblock \showarticletitle{An Information Recommendation Technique Based on
  Influence and Activeness of Users in Social Networks}.
\newblock \bibinfo{journal}{\emph{Applied Sciences}} \bibinfo{volume}{11},
  \bibinfo{number}{6} (\bibinfo{year}{2021}), \bibinfo{pages}{2530}.
\newblock


\bibitem[Min et~al\mbox{.}(2021)]%
        {DBLP:journals/pvldb/MinWHHXECH21}
\bibfield{author}{\bibinfo{person}{Seungwon Min}, \bibinfo{person}{Kun Wu},
  \bibinfo{person}{Sitao Huang}, \bibinfo{person}{Mert Hidayetoglu},
  \bibinfo{person}{Jinjun Xiong}, \bibinfo{person}{Eiman Ebrahimi},
  \bibinfo{person}{Deming Chen}, {and} \bibinfo{person}{Wen{-}mei~W. Hwu}.}
  \bibinfo{year}{2021}\natexlab{}.
\newblock \showarticletitle{Large Graph Convolutional Network Training with
  GPU-Oriented Data Communication Architecture}.
\newblock \bibinfo{journal}{\emph{Proc. {VLDB} Endow.}} \bibinfo{volume}{14},
  \bibinfo{number}{11} (\bibinfo{year}{2021}), \bibinfo{pages}{2087--2100}.
\newblock
\href{https://doi.org/10.14778/3476249.3476264}{doi:\nolinkurl{10.14778/3476249.3476264}}


\bibitem[Peng et~al\mbox{.}(2022)]%
        {DBLP:journals/pvldb/PengCSSCC22}
\bibfield{author}{\bibinfo{person}{Jingshu Peng}, \bibinfo{person}{Zhao Chen},
  \bibinfo{person}{Yingxia Shao}, \bibinfo{person}{Yanyan Shen},
  \bibinfo{person}{Lei Chen}, {and} \bibinfo{person}{Jiannong Cao}.}
  \bibinfo{year}{2022}\natexlab{}.
\newblock \showarticletitle{{SANCUS:} Staleness-Aware Communication-Avoiding
  Full-Graph Decentralized Training in Large-Scale Graph Neural Networks}.
\newblock \bibinfo{journal}{\emph{Proc. {VLDB} Endow.}} \bibinfo{volume}{15},
  \bibinfo{number}{9} (\bibinfo{year}{2022}), \bibinfo{pages}{1937--1950}.
\newblock
\urldef\tempurl%
\url{https://www.vldb.org/pvldb/vol15/p1937-peng.pdf}
\showURL{%
\tempurl}


\bibitem[Qiao et~al\mbox{.}(2021)]%
        {DBLP:conf/icbk/QiaoM0Z21}
\bibfield{author}{\bibinfo{person}{Xiyu Qiao}, \bibinfo{person}{Yuliang Ma},
  \bibinfo{person}{Ye Yuan}, {and} \bibinfo{person}{Xiangmin Zhou}.}
  \bibinfo{year}{2021}\natexlab{}.
\newblock \showarticletitle{Influence Maximization Using User Connectivity
  Guarantee in Social Networks}. In \bibinfo{booktitle}{\emph{ICBK}}.
  \bibinfo{pages}{369--376}.
\newblock


\bibitem[Qin et~al\mbox{.}(2020)]%
        {DBLP:journals/tkde/QinZCHZ20}
\bibfield{author}{\bibinfo{person}{Dong Qin}, \bibinfo{person}{Xiangmin Zhou},
  \bibinfo{person}{Lei Chen}, \bibinfo{person}{Guangyan Huang}, {and}
  \bibinfo{person}{Yanchun Zhang}.} \bibinfo{year}{2020}\natexlab{}.
\newblock \showarticletitle{Dynamic Connection-Based Social Group
  Recommendation}.
\newblock \bibinfo{journal}{\emph{{IEEE} Trans. Knowl. Data Eng.}}
  \bibinfo{volume}{32}, \bibinfo{number}{3} (\bibinfo{year}{2020}),
  \bibinfo{pages}{453--467}.
\newblock


\bibitem[Sankar et~al\mbox{.}(2020)]%
        {DBLP:conf/sigir/SankarWWZYS20}
\bibfield{author}{\bibinfo{person}{Aravind Sankar}, \bibinfo{person}{Yanhong
  Wu}, \bibinfo{person}{Yuhang Wu}, \bibinfo{person}{Wei Zhang},
  \bibinfo{person}{Hao Yang}, {and} \bibinfo{person}{Hari Sundaram}.}
  \bibinfo{year}{2020}\natexlab{}.
\newblock \showarticletitle{GroupIM: {A} Mutual Information Maximization
  Framework for Neural Group Recommendation}. In
  \bibinfo{booktitle}{\emph{{SIGIR}}}. \bibinfo{pages}{1279--1288}.
\newblock


\bibitem[Subbian et~al\mbox{.}(2016)]%
        {DBLP:conf/cikm/SubbianAH16}
\bibfield{author}{\bibinfo{person}{Karthik Subbian}, \bibinfo{person}{Charu~C.
  Aggarwal}, {and} \bibinfo{person}{Kshiteesh Hegde}.}
  \bibinfo{year}{2016}\natexlab{}.
\newblock \showarticletitle{Recommendations For Streaming Data}. In
  \bibinfo{booktitle}{\emph{CIKM}}. \bibinfo{pages}{2185--2190}.
\newblock


\bibitem[Tao et~al\mbox{.}(2009)]%
        {DBLP:conf/sigmod/TaoYSK09}
\bibfield{author}{\bibinfo{person}{Yufei Tao}, \bibinfo{person}{Ke Yi},
  \bibinfo{person}{Cheng Sheng}, {and} \bibinfo{person}{Panos Kalnis}.}
  \bibinfo{year}{2009}\natexlab{}.
\newblock \showarticletitle{Quality and efficiency in high dimensional nearest
  neighbor search}. In \bibinfo{booktitle}{\emph{SIGMOD}}.
  \bibinfo{pages}{563--576}.
\newblock


\bibitem[Tripathy et~al\mbox{.}(2020)]%
        {DBLP:conf/sc/TripathyYB20}
\bibfield{author}{\bibinfo{person}{Alok Tripathy},
  \bibinfo{person}{Katherine~A. Yelick}, {and} \bibinfo{person}{Aydin
  Bulu{\c{c}}}.} \bibinfo{year}{2020}\natexlab{}.
\newblock \showarticletitle{Reducing communication in graph neural network
  training}. In \bibinfo{booktitle}{\emph{{SC}}}. \bibinfo{pages}{70}.
\newblock


\bibitem[Wang et~al\mbox{.}(2019)]%
        {DBLP:conf/sigir/Wang0WFC19}
\bibfield{author}{\bibinfo{person}{Xiang Wang}, \bibinfo{person}{Xiangnan He},
  \bibinfo{person}{Meng Wang}, \bibinfo{person}{Fuli Feng}, {and}
  \bibinfo{person}{Tat{-}Seng Chua}.} \bibinfo{year}{2019}\natexlab{}.
\newblock \showarticletitle{Neural Graph Collaborative Filtering}. In
  \bibinfo{booktitle}{\emph{SIGIR}}. \bibinfo{pages}{165--174}.
\newblock


\bibitem[Wu et~al\mbox{.}(2019)]%
        {DBLP:conf/sigir/WuSFHWW19}
\bibfield{author}{\bibinfo{person}{Le Wu}, \bibinfo{person}{Peijie Sun},
  \bibinfo{person}{Yanjie Fu}, \bibinfo{person}{Richang Hong},
  \bibinfo{person}{Xiting Wang}, {and} \bibinfo{person}{Meng Wang}.}
  \bibinfo{year}{2019}\natexlab{}.
\newblock \showarticletitle{A Neural Influence Diffusion Model for Social
  Recommendation}. In \bibinfo{booktitle}{\emph{SIGIR}}.
  \bibinfo{pages}{235--244}.
\newblock


\bibitem[Wu et~al\mbox{.}(2023)]%
        {DBLP:conf/www/WuX0J0ZY23}
\bibfield{author}{\bibinfo{person}{Xixi Wu}, \bibinfo{person}{Yun Xiong},
  \bibinfo{person}{Yao Zhang}, \bibinfo{person}{Yizhu Jiao},
  \bibinfo{person}{Jiawei Zhang}, \bibinfo{person}{Yangyong Zhu}, {and}
  \bibinfo{person}{Philip~S. Yu}.} \bibinfo{year}{2023}\natexlab{}.
\newblock \showarticletitle{ConsRec: Learning Consensus Behind Interactions for
  Group Recommendation}. In \bibinfo{booktitle}{\emph{{WWW}}}.
  \bibinfo{pages}{240--250}.
\newblock


\bibitem[Yin et~al\mbox{.}(2019)]%
        {DBLP:conf/icde/YinW0LYZ19}
\bibfield{author}{\bibinfo{person}{Hongzhi Yin}, \bibinfo{person}{Qinyong
  Wang}, \bibinfo{person}{Kai Zheng}, \bibinfo{person}{Zhixu Li},
  \bibinfo{person}{Jiali Yang}, {and} \bibinfo{person}{Xiaofang Zhou}.}
  \bibinfo{year}{2019}\natexlab{}.
\newblock \showarticletitle{Social Influence-Based Group Representation
  Learning for Group Recommendation}. In \bibinfo{booktitle}{\emph{ICDE}}.
  \bibinfo{pages}{566--577}.
\newblock


\bibitem[Yuan et~al\mbox{.}(2014)]%
        {DBLP:conf/kdd/YuanCL14}
\bibfield{author}{\bibinfo{person}{Quan Yuan}, \bibinfo{person}{Gao Cong},
  {and} \bibinfo{person}{Chin{-}Yew Lin}.} \bibinfo{year}{2014}\natexlab{}.
\newblock \showarticletitle{{COM:} a generative model for group
  recommendation}. In \bibinfo{booktitle}{\emph{SIGKDD}}.
  \bibinfo{pages}{163--172}.
\newblock


\bibitem[Zeng et~al\mbox{.}(2020)]%
        {zeng2019graphsaint}
\bibfield{author}{\bibinfo{person}{Hanqing Zeng}, \bibinfo{person}{Hongkuan
  Zhou}, \bibinfo{person}{Ajitesh Srivastava}, \bibinfo{person}{Rajgopal
  Kannan}, {and} \bibinfo{person}{Viktor~K. Prasanna}.}
  \bibinfo{year}{2020}\natexlab{}.
\newblock \showarticletitle{GraphSAINT: Graph Sampling Based Inductive Learning
  Method}. In \bibinfo{booktitle}{\emph{{ICLR}}}.
\newblock


\bibitem[Zhai et~al\mbox{.}(2023)]%
        {DBLP:conf/icde/ZhaiLYX23}
\bibfield{author}{\bibinfo{person}{Shuoyao Zhai}, \bibinfo{person}{Baichuan
  Liu}, \bibinfo{person}{Deqing Yang}, {and} \bibinfo{person}{Yanghua Xiao}.}
  \bibinfo{year}{2023}\natexlab{}.
\newblock \showarticletitle{Group Buying Recommendation Model Based on
  Multi-task Learning}. In \bibinfo{booktitle}{\emph{{ICDE}}}.
  \bibinfo{pages}{978--991}.
\newblock


\bibitem[Zhang et~al\mbox{.}(2021)]%
        {DBLP:conf/icde/ZhangGJ021}
\bibfield{author}{\bibinfo{person}{Jun Zhang}, \bibinfo{person}{Chen Gao},
  \bibinfo{person}{Depeng Jin}, {and} \bibinfo{person}{Yong Li}.}
  \bibinfo{year}{2021}\natexlab{}.
\newblock \showarticletitle{Group-Buying Recommendation for Social E-Commerce}.
  In \bibinfo{booktitle}{\emph{ICDE}}. \bibinfo{pages}{1536--1547}.
\newblock


\bibitem[Zhang et~al\mbox{.}(2020)]%
        {DBLP:conf/kdd/ZhangZTKLX20}
\bibfield{author}{\bibinfo{person}{Kaichen Zhang}, \bibinfo{person}{Jingbo
  Zhou}, \bibinfo{person}{Donglai Tao}, \bibinfo{person}{Panagiotis Karras},
  \bibinfo{person}{Qing Li}, {and} \bibinfo{person}{Hui Xiong}.}
  \bibinfo{year}{2020}\natexlab{}.
\newblock \showarticletitle{Geodemographic Influence Maximization}. In
  \bibinfo{booktitle}{\emph{{SIGKDD}}}. \bibinfo{pages}{2764--2774}.
\newblock


\bibitem[Zhang et~al\mbox{.}(2022)]%
        {DBLP:conf/kdd/ZhangDXDDW22}
\bibfield{author}{\bibinfo{person}{Xiao Zhang}, \bibinfo{person}{Sunhao Dai},
  \bibinfo{person}{Jun Xu}, \bibinfo{person}{Zhenhua Dong},
  \bibinfo{person}{Quanyu Dai}, {and} \bibinfo{person}{Ji{-}Rong Wen}.}
  \bibinfo{year}{2022}\natexlab{}.
\newblock \showarticletitle{Counteracting User Attention Bias in Music
  Streaming Recommendation via Reward Modification}. In
  \bibinfo{booktitle}{\emph{{SIGKDD}}}. \bibinfo{pages}{2504--2514}.
\newblock


\bibitem[Zhang et~al\mbox{.}(2023)]%
        {DBLP:conf/icde/ZhangLXLY0XCM23}
\bibfield{author}{\bibinfo{person}{Yixin Zhang}, \bibinfo{person}{Yong Liu},
  \bibinfo{person}{Hao Xiong}, \bibinfo{person}{Yi Liu},
  \bibinfo{person}{Fuqiang Yu}, \bibinfo{person}{Wei He},
  \bibinfo{person}{Yonghui Xu}, \bibinfo{person}{Lizhen Cui}, {and}
  \bibinfo{person}{Chunyan Miao}.} \bibinfo{year}{2023}\natexlab{}.
\newblock \showarticletitle{Cross-Domain Disentangled Learning for E-Commerce
  Live Streaming Recommendation}. In \bibinfo{booktitle}{\emph{{ICDE}}}.
  \bibinfo{pages}{2955--2968}.
\newblock


\bibitem[Zhou et~al\mbox{.}(2019a)]%
        {DBLP:journals/vldb/ZhouQ0Z19}
\bibfield{author}{\bibinfo{person}{Xiangmin Zhou}, \bibinfo{person}{Dong Qin},
  \bibinfo{person}{Lei Chen}, {and} \bibinfo{person}{Yanchun Zhang}.}
  \bibinfo{year}{2019}\natexlab{a}.
\newblock \showarticletitle{Real-time context-aware social media
  recommendation}.
\newblock \bibinfo{journal}{\emph{{VLDB} J.}} \bibinfo{volume}{28},
  \bibinfo{number}{2} (\bibinfo{year}{2019}), \bibinfo{pages}{197--219}.
\newblock


\bibitem[Zhou et~al\mbox{.}(2019b)]%
        {DBLP:conf/icde/ZhouQ0CZ19}
\bibfield{author}{\bibinfo{person}{Xiangmin Zhou}, \bibinfo{person}{Dong Qin},
  \bibinfo{person}{Xiaolu Lu}, \bibinfo{person}{Lei Chen}, {and}
  \bibinfo{person}{Yanchun Zhang}.} \bibinfo{year}{2019}\natexlab{b}.
\newblock \showarticletitle{Online Social Media Recommendation Over Streams}.
  In \bibinfo{booktitle}{\emph{ICDE}}. \bibinfo{pages}{938--949}.
\newblock


\bibitem[Zhu et~al\mbox{.}(2019)]%
        {DBLP:journals/pvldb/ZhuZYLZALZ19}
\bibfield{author}{\bibinfo{person}{Rong Zhu}, \bibinfo{person}{Kun Zhao},
  \bibinfo{person}{Hongxia Yang}, \bibinfo{person}{Wei Lin},
  \bibinfo{person}{Chang Zhou}, \bibinfo{person}{Baole Ai},
  \bibinfo{person}{Yong Li}, {and} \bibinfo{person}{Jingren Zhou}.}
  \bibinfo{year}{2019}\natexlab{}.
\newblock \showarticletitle{AliGraph: {A} Comprehensive Graph Neural Network
  Platform}.
\newblock \bibinfo{journal}{\emph{Proc. {VLDB} Endow.}} \bibinfo{volume}{12},
  \bibinfo{number}{12} (\bibinfo{year}{2019}), \bibinfo{pages}{2094--2105}.
\newblock
\href{https://doi.org/10.14778/3352063.3352127}{doi:\nolinkurl{10.14778/3352063.3352127}}


\end{thebibliography}


\end{document}